\documentclass[journal]{IEEEtran}
\usepackage[utf8]{inputenc}

\usepackage{xcolor}
\usepackage{color, soul}
\usepackage{enumerate}
\usepackage[shortlabels]{enumitem}
\usepackage{url}
\usepackage{cite}
\usepackage{tabu}
\usepackage{bm, bbm} 
\usepackage{mathtools}
\usepackage{amsmath, amssymb, amsthm}
\usepackage{nicefrac}
\usepackage{empheq}

\usepackage{booktabs}
\usepackage{graphicx}
\usepackage{multirow}
\usepackage{units}
\usepackage{siunitx}
\usepackage{todonotes}
\usepackage{enumitem}
\usepackage{svg}

\usepackage[noabbrev,capitalize]{cleveref}

\crefname{equation}{}{}
\Crefname{equation}{}{}

\theoremstyle{definition} % bold title, normal text
\newtheorem{prop}{Proposition}
\theoremstyle{definition} % bold title, normal text
\newtheorem{theorem}{Theorem}
\theoremstyle{remark} % italic title, normal text
\newtheorem{remark}{Remark}

% List Environment for Nomenclature
\newenvironment{ldescription}[1]
  {\begin{list}{}%
   {\renewcommand\makelabel[1]{##1\hfill}%
   \settowidth\labelwidth{\makelabel{#1}}%
   \setlength\leftmargin{\labelwidth}
   \addtolength\leftmargin{\labelsep}}}
  {\end{list}}

% Example:
% \section{Test}
% \noindent \textit{A. Sets and Indices}
% \begin{ldescription}{$xxxxxxxxxx$}
% \item [$B$] Set of buses, indexed by $b$.
% \item [$E$] Set of representative days, indexed by $e$.
% \end{ldescription}

% Some Operator Definitions
\newcommand{\set}[1]{\mathcal{#1}} % for caligraphed set-symbols
 % for bold matrix / vector names
\newcommand{\norm}[1]{\left\lVert#1\right\rVert} % For norm with double lines
% Specific shorthands:

\newcommand{\om}{\bm{\omega}}
\newcommand{\Om}{\bm{\Omega}}

\DeclareMathOperator{\Ploss}{Ploss}
\DeclareMathOperator{\Qloss}{Qloss}
\DeclareMathOperator{\LP}{LP}
\DeclareMathOperator{\LQ}{LQ}
\DeclareMathOperator{\Var}{Var}

\DeclareMathOperator{\Stdv}{Stdv}

\DeclareMathOperator{\diag}{diag}

\DeclareMathOperator{\Eptn}{\mathbb{E}}
\DeclareMathOperator{\Prb}{\mathbb{P}}

\usepackage{amsmath,amsfonts,amsthm}

% correct bad hyphenation here
\hyphenation{networks}

% \linespread{.95}

\title{Distribution Electricity Pricing under Uncertainty}
\author{Robert Mieth, \textit{Student Member, IEEE}, and Yury Dvorkin, \textit{Member, IEEE}
\thanks{This work was supported by the NSF Award CMMI-1825212 and ECCS-1847285  and the Alfred P. Sloan Foundation grant G-2019-12363.}
\thanks{R, Mieth and Y. Dvorkin are with the Department of Electrical and Computer Engineering, Tandon School of Engineering, New York University, New York, NY 10012 USA (e-mail: robert.mieth@nyu.edu, dvorkin@nyu.edu)}
\thanks{R. Mieth is also with the Fakultät IV Elektrotechnik und Informatik, Technische Universität Berlin, 10587 Berlin, Germany.}
}

\begin{document}

\pagestyle{empty}
\bstctlcite{IEEE:BSTcontrol} % To activate IEEEtran.bst controls in .bib file

\maketitle

\begin{abstract}
 Distribution locational marginal prices (DLMPs) facilitate the efficient operation of low-voltage electric power distribution systems. We propose an approach to internalize the stochasticity of renewable distributed energy resources (DERs) and risk tolerance of the distribution system operator in DLMP computations. This is achieved by means of applying conic duality to a chance-constrained AC optimal power flow. We show that the resulting DLMPs consist of the terms that allow to itemize the prices for the active and reactive power production, balancing regulation, network power losses, and voltage support provided. Finally, we prove the proposed DLMP constitute a competitive equilibrium, which can be leveraged for designing a distribution electricity market, and show that imposing chance constraints on voltage limits distorts the equilibrium. 
\end{abstract}

% \begin{IEEEkeywords}
% \hl{Todo}
% \end{IEEEkeywords}{}

\section*{Nomenclature}

\noindent \emph{Sets:}
\vspace{-1mm}
% \vspace*{1mm}
\begin{ldescription}{$xxxx$}
\item [$\set{A}_i$] Set of ancestor nodes of node $i$
\item [$\set{C}_i$] Set of children nodes of node $i$
\item [$\set{D}_i$] Set of downstream nodes of node $i$ (including $i$)
\item [$\set{G}$] Set of controllable DERs, \mbox{$\set{G}\subseteq\set{N}$}
\item [$\set{L}$] Set of edges/lines indexed by $i\in\set{N}^+$
\item [$\set{N}$] Set of nodes, indexed by $i =\{0,1,\ldots, n \}$
\item [$\set{N}^+$] Set of nodes without the root node, i.e. $\set{N}^+ =\{\set{N} \backslash 0 \}$
\end{ldescription}

\noindent
\emph{Variables and Parameters:}
\begin{ldescription}{$xxxxx$}
\item [$a_{1-3,c}$] Polygonal approximation coefficients, $c=1,...,12$
\item [$a_i$] Generator cost function parameter (first-order) 
\item [$b_i$] Generator cost function parameter (second-order)
\item [$c_{0-2,i}$] Generator cost function parameters (standard form)
\item [$d_i^{P}$] Active net power demand at node $i$
\item [$d_i^{Q}$] Reactive net power demand at node $i$
\item [$e$] Vector of ones of appropriate dimensions
\item [$f_i^{P}$] Active power flow on edge $i$
\item [$\bar{f}_i^{P}$] Active power flow at a forecasted operating point
\item [$f_i^{Q}$] Reactive power flow on edge $i$
\item [$\bar{f}_i^{P}$] Reactive power flow at a forecasted operating point
\item [$g_i^{P}$] Active power production at node $i$
\item [$g_i^{Q}$] Reactive power production at node $i$
\item [$g_i^{P,\max}$] Maximum active power production of generator $i$
\item [$g_i^{P,\min}$] Minimum active power production of generator $i$
\item [$g_i^{Q,\max}$] Maximum reactive power production of generator $i$
\item [$g_i^{Q,\min}$] Minimum reactive power production of generator $i$
\item [$l_i$] Squared current magnitude on edge $i$
\item [$\bar{l}_i$] Squared current at a forecasted operating point
\item [$r_i$] Resistance of edge $i$
\item [$s^2$] Total sum over the covariance matrix ($s^2= e\Sigma e^{\!\top}$)
\item [$t^v_i$] Auxiliary variable for voltage standard deviation
\item [$t^f_i$] Auxiliary variable for power flow standard deviation
\item [$u_i$] Squared voltage magnitude at node $i$ ($u_i = v_i^2$)
\item [$\bar{u}_i$] Squared voltage at a forecasted operating point
\item [$u_i^{\max}$] Upper voltage limit at node $i$
\item [$u_i^{\min}$] Lower voltage limit at node $i$
\item [$v_i$] Voltage magnitude at node $i$
\item [$x_i$] Reactance of edge $i$
\item [$z_\epsilon$] $(1-\epsilon)$-quantile of the cumulative distribution function

\item [$A$] Flow sensitivity matrix
\item [$\check{A}$] Inverse of $A$ ($\check{A}=A^{-1}$)
\item [$A^L$] Loss-aware extension of $A$
\item [$\LP_{ij}^{P}$] Relative active power losses allocated to node $i$ by changes in active power injection at node $j$
\item [$\LP_{ij}^{Q}$] Relative active power losses allocated to node $i$ by changes in reactive power injection at node $j$
\item [$\LQ_{ij}^{P}$] Relative reactive power losses allocated to node $i$ by changes in active power injection at node $j$
\item [$\LQ_{ij}^{Q}$] Relative reactive power losses allocated to node $i$ by changes in reactive power injection at node $j$
\item [$R$] Matrix mapping nodal active power injections into voltage magnitudes
\item [$\check{R}$] Inverse of $R$ ($\check{R}=R^{-1}$)
\item [$R^L$] Loss-aware extension of $R$
\item [$\check{R}^L$]Inverse of $R^L$ ($\check{R}^L=(R^L)^{-1}$)
\item [$S_i^{\max}$] Apparent power flow limit of edge $i$
\item [$X$] Matrix mapping reactive power injections to voltage magnitudes

\item [$\alpha_i$] Balancing participation factor of generation at node $i$
\item [$\gamma$] Price for balancing regulation
\item [$\epsilon_v$] Probability of voltage constraint violations 
\item [$\epsilon_g$] Probability of generation constraint violations 
\item [$\epsilon_f$] Probability of flow limit constraint violations 
\item [$\lambda^P_i$] DLMP for active power at node $i$
\item [$\lambda^Q_i$] DLMP for reactive power at node $i$
\item [$\rho^f_i$] Auxiliary variable for flow chance constraints 
\item [$\rho^v_i$] Auxiliary variable for voltage chance constraints 
\item [$\sigma^2_i$] Variance at node $i$
\item [$\om$] Vector of random nodal forecast errors ${[\om_i, i\in\set{N}]}$
\item [$\Sigma$] Covariance matrix of $\om$ ($\Sigma =  \Var[\om]$)
\item [$\Om$] Sum of random nodal forecast errors ($\Om=e^{\!\top}\om$)
\end{ldescription}

\section{Introduction}

\IEEEPARstart{N}{odal} electricity pricing has been shown to support the efficient scheduling and dispatch of energy resources at the transmission (wholesale) level \cite{schweppe2013spot}. However, the proliferation of distributed energy resources (DERs) in low-voltage distribution systems and the subsequent growth of independent, small-scale energy producers has weakened a correlation between wholesale electricity prices and distribution electricity rates (tariffs), thus distorting economic signals experienced by end-users \cite{verzijlbergh2014renewable}. To overcome these distortions, distribution locational marginal prices (DLMPs) have been proposed to incentivize optimal  operation and DER investments in low-voltage distribution systems, \cite{sotkiewicz2006nodal,heydt2012pricing,bai2018distribution,li2014distribution,huang2015distribution,zhao2019congestion}, and to facilitate the coordination between the transmission and distribution systems, \cite{papavasiliou2018analysis,yuan2018distribution,ding2013real,caramanis2016co}. 
However, implementing DLMPs in practice is obstructed by the inability to accurately capture stochasticity of renewable generation resources (e.g. solar or wind) in the price formation process, \cite{wong2007pricing,morales2012pricing,martin2015stochastic,kazempour2018stochastic,dall2017chance,bienstock2014chance,dvorkin2019chance,mieth2018online}. 
\textcolor{black}{As a result, prospective distribution market designs lack completeness, i.e. do not offer customized financial instruments to deal with each source of uncertainty, which may result in market inefficiencies and welfare losses, \cite{doi:10.1137/110851778, Philpott2016}. 
Motivated by the need to complete distribution market designs with uncertainty and risk information on renewable generation resources, this paper proposes a new approach to obtain DLMPs that explicitly incorporate the stochasticity of renewable generation resources and analyzes the effect of risk and uncertainty parameters on the price formation process.} 

Previously, DLMPs have been considered for numerous applications.
Similarly to wholesale markets, \cite{li2014distribution,huang2015distribution} propose a distribution day-ahead market to alleviate congestion caused by electric vehicle charging using a welfare-maximizing DC optimal power flow (OPF) model for DLMP computations. 
Alternatively, the model in \cite{sotkiewicz2006nodal} introduces  power losses in DLMP computations to properly reward DERs for reducing system-wide power losses. 
In \cite{heydt2012pricing}, the authors compute energy,  congestion, and power loss DLMP components in the presence of advanced smart grid devices, e.g. solid state transformers and variable impedance lines. 
Furthermore, DLMPs have been shown to support the system operation, e.g. by incentivizing voltage support from DERs, \cite{bai2018distribution}, or by mitigating voltage imbalance in a three-phase system, \cite{zhao2019congestion}.
Papavasiliou \cite{papavasiliou2018analysis} comprehensively analyzes DLMPs and their properties using the branch AC power flow model and its convex second-order conic (SOC) relaxation.
The branch power flow model facilitates the use of spot electricity pricing to analyze the effect of the root node prices, power losses, voltage constraints and thermal line limits on DLMP computations, {\color{black}but yields significant computational complexity} even for small networks. On the other hand, its SOC relaxation makes it possible to represent DLMPs in terms of local information, i.e. parameters of a given distribution node and its neighbors.  However, all  DLMP computations in \cite{li2014distribution,huang2015distribution,sotkiewicz2006nodal, heydt2012pricing, bai2018distribution, zhao2019congestion, papavasiliou2018analysis} disregard stochasticity of renewable DER technologies and, therefore, the resulting prices do not provide proper incentives to efficiently cope with balancing regulation needs. The need to consider stochasticity of renewable generation resources in the price formation process is recognized for transmission (wholesale) electricity pricing, e.g.  \cite{wong2007pricing,martin2015stochastic,morales2012pricing,kazempour2018stochastic, ye2017uncertainty, kuang2018pricing}, but there is no {\color{black}framework for stochasticity-aware pricing in emerging distribution markets.} 

Recently, chance-constrained (CC) programming has been leveraged to deal with stochasticity of DERs in the distribution system and to robustify operating decisions of the distribution system operator (DSO), \cite{dall2017chance,mieth2018data,hassan2018optimal,baker2019joint,bruninx2017valuing}. 
The models in \cite{dall2017chance,mieth2018data,hassan2018optimal,baker2019joint} improve compliance with distribution system limits at a moderate, if any, increase in operating costs. However, with the  exception of  our prior work in \cite{kuang2018pricing,dvorkin2019chancemarket,mieth2019risk}, their application for electricity pricing has not been considered. This paper fills this gap and derives {\color{black}DLMPs that internalize stochasticity} using the chance-constrained framework.
{\color{black} % Block CCJustification
This framework offers some significant advantages over other uncertainty-aware methods such as robust or scenario-based stochastic-optimization.
First, chance constraints internalize continuous probability distributions of uncertain parameters, which are readily available from historical data (e.g. weather data can be obtained from National Weather Service operated by National Oceanic and Atmospheric Agency, \cite{dvorkin2015uncertainty}, and load data can be retrieved from archived forecasts, \cite{mieth2018online}). 
Further, they can accommodate a broad variety of parametric distributions, \cite{dvorkin2015uncertainty,roald2015security}, and attain distributional robustness, \cite{mieth2018data,dvorkin2019chancemarket}. 
Hence, unlike scenario-based stochastic and robust optimization methods, chance constraints do not require discretizing a probability space for scenario sampling or for deriving a finite uncertainty set. 
In the electricity pricing and market design context, avoiding such somewhat arbitrary and nontransparent  input data manipulations can improve acceptance of stochastic markets among market participants, \cite{kazempour2018stochastic}. 
Second, chance-constrained programs can be solved efficiently at scale, \cite{lubin2016robust}, and generally yield less conservative results, \cite{bienstock2014chance}. 
The residual conservatism can be tuned via a confidence interval, which can be related to established system reliability metrics, e.g. loss of load probability (LOLP) or expected energy not served (EENS), \cite{wu2014chance}.
Third, the chance-constrained OPF automatically fulfills all internalized market design considerations, e.g. revenue adequacy, cost recovery and incentive compatibility, for all potential outcomes and does not require scenario-specific adjustments, \cite{dvorkin2019chancemarket}, which cause social welfare losses if scenario-based stochastic optimization methods are used, \cite{kazempour2018stochastic}.
}

{\color{black} % Block Contribution final
To take advantage of chance constraints, we formulate a CC AC-OPF model for a distribution system with renewable DERs and obtain its SOC equivalent, similarly to the previous work in \cite{mieth2018data,dall2017chance}. This convex equivalent enables the use of duality theory for the main propositions of this paper: 
\begin{enumerate}[(i)]
    \item to compute DLMPs that internalize the stochasticity of renewable DERs and risk tolerance of the DSO and
    \item to itemize DLMP components related to nodal active and reactive power production and demand, balancing regulation, network power losses and voltage support. 
\end{enumerate}
}
{\color{black} % Block MarketContext 
From a market  perspective, this paper describes an approach to price generic distribution-level energy and reserve products  ahead of   real-time operations (e.g. daily, hourly or sub-hourly). In that sense, the model and pricing approach presented below are similar to  a centralized market clearing problem in \cite{caramanis2016co}, which minimizes social costs, schedules the available capacity of resources, and derives marginal-cost-based transmission and distribution prices for day-ahead, hour-ahead, or 5-min real-time markets.
However, unlike in \cite{caramanis2016co}, chance constraints endogenously determine both the system reserve requirement and its network-constrained allocation given a desired risk level and uncertainty parameters. 
The resulting stochastic DLMPs capture these requirements and allocations and can be used to establish a co-optimized stochastic distribution market\textcolor{black}{, which attains a  market equilibrium under the assumption that all market participants share similar knowledge about uncertain parameters.}
With increasing participation of DERs in the provision of grid support services, such a  distribution marketplace  will allow for  coordinating grid support services  among transmission and distribution systems. 
For example,  DER aggregators in Germany and Belgium seek to provide regulation services  at both the transmission and distribution level, \cite{nextkraftwerke}. However, only the former provision is administrated on a market basis, which  makes distribution services less attractive for DER aggregators.  Hence, DLMPs that internalize desired risk levels and uncertainty parameters can support the DSO in rolling out  efficient market platforms  to incentivize DER participation in supporting  distribution system operations. 

}

\section{Model Formulation}

In the following, we consider a generic low-voltage distribution system with controllable DERs and uncontrollable (behind-the-meter) stochastic  generation resources. 
{\color{black} %Block DERParamters final
All controllable DERs are small-scale generators with given production costs and constant generation limits.}
\textcolor{black}{The DSO is responsible for scheduling and dispatching controllable generation resources in the least-cost manner using the best available forecasts, while meeting technical limits on distribution system operations and computing DLMPs.}
The distribution system is a radial network given by graph $\Gamma (\set{N}, \set{L})$, as in Fig.~\ref{fig:notation_illustration}, where $\set{N}$ and $\set{L}$ are the sets of nodes indexed by $i \in \{0,1,...,n\}$ and edges (lines) indexed by $i\in\{1,2,...,n\}$.
The root node, indexed as $0$, is the substation, i.e.  an infinite power source with fixed voltage $v_0$, and $\set{N}^+ \coloneqq \set{N} \setminus \{0\}$ is the set of all non-root nodes. 
Each node is associated with ancestor (or parent) node $\set{A}_i$, a set of children nodes $\set{C}_i$ and a set of downstream  nodes  $\set{D}_i$ (including $i$). 
Since $\Gamma$ is radial, it is \mbox{$|\set{A}_i| = 1, i \in \set{N}^+$} and all edges {\color{black}$i \in \set{L}$} are indexed by $\set{N}^+$. 

Each node is characterized by its active and reactive net power demand  ($d_i^P$ and $d_i^Q$, $i \in \set{N}$),  i.e. the difference between the nodal load and behind-the-meter DER output, and voltage magnitude $v_{i} \in [v_{i}^{\min}, v_{i}^{max}], i \in \set{N}$, where $v_{i}^{\min}$ and $v_{i}^{\max}$ are the upper and lower voltage limits. 
%Net power demand comprises load and uncontrollable behind-the-meter generation.
To use linear operators, we introduce $u_{i} = v_{i}^2, i \in \set{N}$ limited by $u_i^{\min} = (v_i^{\min})^2$ and $u_i^{\max} = (v_i^{\max})^2$. 
\textcolor{black}{If node $i$ hosts a controllable DER, the active and reactive power output is modeled as dispatchable with ranges $g_{i}^P \in [g_{i}^{P,\min}, g_{i}^{P,\max}]$ and $g_{i}^Q \in [g_{i}^{Q,\min},g_{i}^{Q,\max}]$, $i \in \set{G} \subseteq \set{N}^+$.}
Active and reactive power flows on edge $i$  with resistance $r_i$, reactance $x_i$ and apparent power limit~$S_i^{\max}$ are given by  $f_i^P$ and $f_i^Q$. Vectors $r = [r_1, \ldots, r_n]$ and $x = [x_1, \ldots, x_n]$ collect all resistances and reactances. 
Symbols in \textbf{bold} font indicate stochastic (random) variables.

\begin{figure}
\centering
\includegraphics[width=0.95\linewidth]{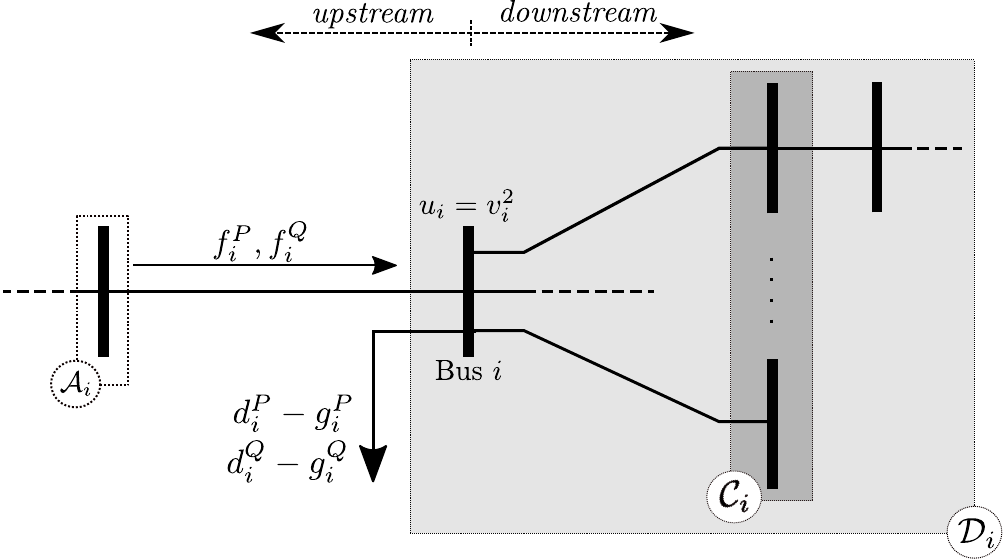}
\caption{Power flow notations in a radial network.}
\label{fig:notation_illustration}
\end{figure}

\subsection{Uncertainty Model and Real-Time Balancing Regulation}

The  active net power demand is modeled as:
\begin{align}
    \bm{d}_i^P = d_i^P + \om_i \label{eq:uncertainty}
\end{align}
where $\bm{d}_i^P$ is  the stochastic demand, $d_i^P$ is a given net demand forecast and $\om_i$ is a given random forecast error. 
%To reduce complexity of the derived formulations we assume that uncertainty is present only for the active power. 
We introduce $\Om \coloneqq \sum_i \om_i = e^{\!\top}\om$, where $\om$ is the column vector collecting all nodal forecast errors and $\om_i$ follows a unimodal distribution with a finite variance. The net demand forecast is assumed to be unbiased, i.e. the expected value and covariance matrix of $\om$ are given by $\Eptn[\om] = 0$ and $\Var[\om] = \Sigma$.
% Potential biases can be included in the scheduled term.

Accounting for forecast error $\om$ requires procuring balancing regulation capacity to continuously match the power supply and demand.   The burden of balancing regulation is distributed among controllable DERs  using participation factors, \cite{bienstock2014chance,dall2017chance,lubin2016robust}. Participation factors $0 \leq \alpha_i \leq 1$ are modeled as  decision variables and  represent a relative amount of the system-wide forecast error ($ \Om$) that  the DER at node~$i$ must compensate for. Therefore, the real-time active power output of each controllable DER ($\bm{g}_i^P$) can be modeled as:
\begin{align}
    \bm{g}_i^P = g_i^P + \alpha_i \Om. \label{eq:uncertain_generation}
\end{align}
Thus, the system remains balanced if ${\sum_{i \in \mathcal N} \alpha_i = 1}$.

\subsection{CC AC-OPF Formulation}

Given the uncertainty and balancing regulation models in \eqref{eq:uncertainty} and \eqref{eq:uncertain_generation}, the CC AC-OPF problem for the DSO follows:

\allowdisplaybreaks
\begin{subequations} \label{eq:ccacopf}
\begin{align}
    &\text{CC AC-OPF:} && \min_{\substack{\{g_i^P, g_i^Q,\alpha_i\}_{i\in \set{G}}, \\ \{f_i^P, f_i^Q, u_i\}_{i\in\set{N}^+}}}  \Eptn\left[\sum_{i=0}^n c_i(\bm{g}_i)\right] \label{cc-opf:objective}
\end{align}
\vspace*{-0.5cm}
\begin{align}
    \text{s.t.} \nonumber \\
    &(\lambda^P_0): && g_0^P - \sum_{j\in \set{C}_0}f_j^P = 0 \label{cc-opf:lambda_0}\\
    &(\lambda^Q_0): && g_0^Q - \sum_{j\in \set{C}_0}f_j^Q = 0 \label{cc-opf:pi_0}\\
    &(\lambda^P_i): && f_i^P + g_i^P - \sum_{j\in \set{C}_i}f_j^P = d_i^P && i\in\set{N}^+ \label{cc-opf:lambda_i}\\
    &(\lambda^Q_i): && f_i^Q + g_i^Q - \sum_{j\in \set{C}_i}f_j^Q = d_i^Q && i \in \mathcal N^+\label{cc-opf:pi_i}\\
    &(\beta_i): && u_i + 2(r_i f_i^P + x_i f_i^Q)  =  u_{\set{A}_i} \hspace{-2cm}&& \label{cc-opf:beta_i} 
    i\in \mathcal N^+ \\
    &(\mu_i^+): && \Prb[\bm{u}_i \leq u_i^{\max}] \geq 1-\epsilon_v && i\in\set{N}^+  \label{cc-opf:mu_i+}\\
    &(\mu_i^-): && \Prb[-\bm{u}_i \leq -u_i^{\min}] \geq 1-\epsilon_v && i\in\set{N}^+ \label{cc-opf:mu_i-}\\
    &(\delta_i^+): && \Prb[\bm{g}_i^P \leq g_i^{P,\max}] \geq 1-\epsilon_g && i\in\set{G} \label{cc-opf:delta_i+}\\
    &(\delta_i^-): && \Prb[-\bm{g}_i^P \leq -g_i^{P,\min}] \geq 1-\epsilon_g && i\in\set{G} \label{cc-opf:delta_i-}\\
    &(\eta_i): && \Prb\left[(\bm{f}_i^P)^2 + (f_i^Q)^2 \leq (S_i^{\max})^2\right] \geq 1-\epsilon_f \hspace{-2cm} && \nonumber \label{cc-opf:eta_i} \\
    & && &&i \in \mathcal N^+ \\
    &(\theta_i^+): && g_i^Q \leq g_i^{Q,\max} && i\in\set{G} \label{cc-opf:theta_i+}\\
    &(\theta_i^-): && -g_i^Q \leq -g_i^{Q,\min} && i\in\set{G} \label{cc-opf:theta_i-} \\
    &(\gamma): && \sum_{i=1}^n \alpha_i + \alpha_0 = 1 \label{cc-opf:gamma}
\end{align}%
\label{mod:cc-opf}%
\end{subequations}%
\allowdisplaybreaks[0]%
Eq.~\cref{cc-opf:objective} \textcolor{black}{accounts for the perspective of the risk-neutral DSO} and minimizes the expected operating cost given the cost functions of controllable DERs and their power output under uncertainty. 
Eq.~\cref{cc-opf:lambda_0,cc-opf:pi_0,cc-opf:lambda_i,cc-opf:pi_i} are  linearized AC power flow equations for distribution systems based on the \textit{LinDistFlow} formulation, \cite{baran1989optimal,turitsyn2010local},  that balance the active and reactive power at each node. 
Eq.~\cref{cc-opf:beta_i} computes the voltage magnitude squared at each node. 
Eqs.~\cref{cc-opf:mu_i+,cc-opf:mu_i-,cc-opf:delta_i+,cc-opf:delta_i-,cc-opf:eta_i} ensure that the  voltage magnitudes squared, power flows and generation outputs  under uncertainty will not violate their respective limits with a given confidence level (probability) chosen by the DSO.
{\color{black} %Block EpsilonTuning final
The confidence levels in \eqref{cc-opf:mu_i+}--\eqref{cc-opf:eta_i}  can be related to commonly used system reliability metrics, e.g. loss of load probability (LOLP) or expected energy not served (EENS), \cite{wu2014chance}, and assist the DSO in trading off the likelihood of constraint violations and the solution conservatism. 
}%
Eq.~\cref{cc-opf:gamma} ensures that the procured balancing regulation capacity is sufficient to match the power supplied and consumed. 
Greek letters in parentheses in \eqref{cc-opf:lambda_0}--\eqref{cc-opf:gamma} denote dual multipliers of the respective constraints. 

\subsection{Deterministic Equivalent of the CC AC-OPF}
\label{ssec:deterministic_equivalent_of_ccacopf}

We recast \cref{mod:cc-opf} as a computationally efficient SOC program using a reformulation from   \cite{bienstock2014chance,lubin2016robust, mieth2018data}.

\subsubsection{Expected generation cost} 

Controllable DERs have the following quadratic cost function:
\begin{align}
    c_i(g_i^P) = c_{2,i} (g_i^P)^2 + c_{1,i} g_i^P + c_{0,i}. \label{eq:quadratic_cost_of_der}
\end{align}
For the compactness of the subsequent formulations, we denote $c_{2,i}=1/2b_i$, $c_{1,i} = a_i/b_i$, $c_{0,i} = a_i^2/2b_i$. Given these notations, the deterministic equivalent of the cost function of each DER  given by \eqref{eq:quadratic_cost_of_der} is as follows  (as derived in Appendix~\ref{ax:expected_generation_cost}):
\begin{align}
    \Eptn[c_i(\bm{g}_i^P)] = \frac{(g_i^P + a_i)^2}{2 b_i} + \frac{\alpha_i^2}{2 b_i} s^2, \label{eq:cost_function}
\end{align}
where $a_i \geq 0$ and $ b_i > 0$ are given parameters and ${s^2 \coloneqq e^{\!\top}\Sigma e}$.
Note that power provision by the substation (at node $0$) is modeled with the same cost function.

\subsubsection{Generation Chance Constraints}
As in \cite{bienstock2014chance,lubin2016robust,dvorkin2019chance}, \cref{cc-opf:mu_i+,cc-opf:mu_i-,cc-opf:delta_i+,cc-opf:delta_i-} are the chance constraints linearly dependent on uncertainty $\om$ that can be reformulated into the SOC constraint of the following form:
\begin{align}
    x + \Phi^{\!-1}(1-\epsilon)\Stdv[\bm{x}(\om,\alpha)]  \leq x^{\max} \label{eq:cc_reform_general}
\end{align}
where  $x \coloneqq \Eptn[\bm{x}]$ and $\Stdv[\bm{x}(\om,\alpha)] = \sqrt{\Var[\bm{x}(\om,\alpha)]}$ are the expectation and standard deviation of random variable~$\bm{x}$.
If $\om$ is normally distributed then $\Phi^{-1}(1-\epsilon)$ is the $(1-\epsilon)$-quantile of the standard normal distribution.
For more general distributions we refer the interested reader to \cite{roald2015security} for the choice of $\Phi$.
Using \cref{eq:cc_reform_general}, we reformulate the chance constraints  in \cref{cc-opf:delta_i+,cc-opf:delta_i-} as given by  \cref{det-cc-opf:delta_i+,det-cc-opf:delta_i-}, where $z_{\epsilon_g}= \Phi^{-1}(1-\epsilon_g)$ and ${s \coloneqq \sqrt{e^{\!\top}\Sigma e}}$.
{\color{black} % Block ReserveQuantity final
Using the values of $\alpha_i$, $z_{\epsilon_g}$ and $s$, the reserve margins provided by each controllable DER can be computed as $z_{\epsilon_g} s \alpha_i$ and  then the overall system-wide reserve can be computed as $\sum_{i\in\set{G}}z_{\epsilon_g} s \alpha_i$, see \cite{dvorkin2019chance} for a detailed discussion.
}

\subsubsection{Voltage Chance Constraints}
\label{sssec:voltage_chance_constraints}
Following \cite{mieth2018data}, the nodal  voltage magnitude squared under uncertainty  can be expressed in terms of nodal forecast error $\om$, system-wide forecast error $\Om$ and vector $\alpha \in \mathbb{R}^n$ as follows:
\begin{align}
    \bm{u}_i(\om, \alpha) = u_i - 2 \sum_{j=1}^n R_{ij}(\om_j + \alpha_j\Om) \label{eq:voltage_reformulated}
\end{align}
where $\alpha$ is defined such that its $i$-th entry is zero if no controllable DER is connected to node $i$ and $\alpha_i$ if node $i$ hosts a controllable DER, $R_{ij}$ refers to the elements of the matrix defined as:
\begin{align}
    R \coloneqq A^{\!\top} \diag(r) A \label{eq:def_R_X}
\end{align}
and $A\in\{0,1\}^{n\times n}$ such that $A_{ij}=1$, if edge $i$ is part of the path from the root node $0$ to node $j$, and $A_{ij}=0$ otherwise. 
Given $\alpha$, we introduce an auxiliary notation: 
\begin{align}
    \rho_i^v = [R\alpha]_i = R_{i*} \alpha,
    \label{eq:def_rho}
\end{align}
\noindent
where $[\cdot]_i$ denotes the $i$-th entry of the vector expression and $R_{i*}$ is the $i$-th row of $R$.
This makes it possible to express $\Stdv[\bm{u}_i(\om,\alpha)]$ as follows (see Appendix~\ref{ax:voltage_variance} for the full derivation):
\begin{align}
    \Stdv[\bm{u}_i(\om,\alpha)] 
         &= 2 \norm{(R_{i*} + \rho_i^v e^{\!\top})\Sigma^{1/2}}_2. \label{eq:voltage_stdev_conic}
\end{align}
Using the result in \cref{eq:voltage_stdev_conic}, we can leverage the reformulation given by \eqref{eq:cc_reform_general} and replace the chance constraints in  \eqref{cc-opf:mu_i+}--\eqref{cc-opf:mu_i-} with the convex SOC constraints. 
These SOC constraints are given in \cref{det-cc-opf:zeta_i,det-cc-opf:nu_i,det-cc-opf:mu_i+,det-cc-opf:mu_i-}, where $z_{\epsilon_v} = \Phi^{-1}(1-\epsilon_v)$
{\color{black} and $t_i^v$ is an auxiliary variable.}
Note that \cref{det-cc-opf:nu_i} is equivalent to~\cref{eq:def_rho} because $\check{R} \coloneqq R^{-1}$.
Because the elements of matrix $R$ are resistances $r_i$, which are positive and non-zero, $R$ is positive definite and symmetric due to \cref{eq:def_R_X}, i.e. invertible. 
% The expected voltage $u_i = \Eptn[\bm{u}_i(\om, \alpha)]$ is equivalent to \cref{cc-opf:beta_i}.

\subsubsection{Apparent power flow chance constraints}
The chance-constrained limit on apparent power flows  \cref{cc-opf:eta_i} can be approximated by an inner polygon to accommodate the quadratic dependency on the uncertain parameter \cite{wang2016distributed,dvorkin2019chance}:
\begin{align}
    a_{1,c} \bm{f}_i^P + a_{2,c} f_i^Q + a_{3,c} S_i^{\max} \leq 0  \quad  c = 1...12
\end{align}
where $a_{1,c}$, $a_{2,c}$ and $a_{3,c}$ are coefficients of the set of the linearized constraints.
Using $A$ as defined in \eqref{eq:def_R_X} and expressing $\rho_i^f = [A\alpha]_i = A_{i*} \alpha$, the standard deviation of the  active power flow is given by 
\begin{align}
    \Stdv[\bm{f}_i^P(\om,\alpha)] 
         &= \norm{(A_{i*} - \rho_i^f e^{\!\top})\Sigma^{1/2}}_2. \label{eq:active_flow_stdev_conic}
\end{align}
Using the result in \eqref{eq:active_flow_stdev_conic} and the standard reformulation in  \eqref{eq:cc_reform_general}, we can replace the chance constraint in \cref{cc-opf:eta_i} with the SOC constraints in  \cref{det-cc-opf:nuf_i,det-cc-opf:zetaf_i,det-cc-opf:eta_i}, where  $\check{A} \coloneqq A^{-1}$ and $z_{\epsilon_f} = \Phi^{-1}(1-\epsilon_f)$ {\color{black} and $t_i^f$ is an auxiliary variable.}

\subsubsection{Deterministic CC AC-OPF Equivalent}
Reformulating the objective function and chance constraints as described above leads to  the following deterministic SOC equivalent of the CC AC-OPF  in \eqref{eq:ccacopf}:

% \allowdisplaybreaks
\begin{subequations} \label{eq:eqvCC}
\begin{align}
     &\text{EQV-CC:} && \min_{\substack{\{g_i^P, g_i^Q,\alpha_i\}_{i\in \set{G}}, \\ \{f_i^P, f_i^Q, u_i\}_{i=\set{N}^+}}} 
    \sum_{i=0}^n\left(c_i(g_i^P) + \frac{\alpha_i^2}{2 b_i} s^2\right)
    \label{det-cc-opf:objective}
\end{align}
% \vspace*{-1cm}
\begin{align}
    %  \nonumber \\
   \text{s.t.} & && \text{\cref{cc-opf:lambda_0,cc-opf:lambda_i,cc-opf:pi_0,cc-opf:pi_i,cc-opf:beta_i,cc-opf:theta_i+,cc-opf:theta_i-,cc-opf:gamma}} \nonumber \\
    &(\delta_i^+): && g_i^P + z_{\epsilon_g} s \alpha_i \leq g_i^{P,\max} && i\in\set{G} \label{det-cc-opf:delta_i+}\\
    &(\delta_i^-): && -g_i^P + z_{\epsilon_g} s \alpha_i \leq -g_i^{P,\min} && i\in\set{G} \label{det-cc-opf:delta_i-}\\
    &(\zeta_i): && t_i^v \geq \norm{(R_{i*} + \rho_i^v e^{\!\top})\Sigma^{1/2}}_2 && i\in\set{N}^+ \label{det-cc-opf:zeta_i} \\
    &(\nu_i^v): && \sum_{j=1}^n \check{R}_{ij} \rho_j^v = \alpha_i && i\in\set{G} \label{det-cc-opf:nu_i}\\
    &(\mu_i^+): && u_i + 2 z_{\epsilon_v} t_i^v \leq u_i^{\max} && i\in\set{N}^+ \label{det-cc-opf:mu_i+}\\
    &(\mu_i^-): && -u_i + 2 z_{\epsilon_v} t_i^v \leq -u_i^{\min} && i\in\set{N}^+  \label{det-cc-opf:mu_i-} \\
    &(\nu_i^f): && \sum_{i=1}^n \check{A}_{ij} \rho_j^f = \alpha_i && i\in\set{G} \label{det-cc-opf:nuf_i}\\
    &(\zeta_i^f): && t_i^f \geq \norm{(A_{i*} - \rho_i^f e^{\!\top})\Sigma^{1/2}}_2 && i\in\set{N}^+ \label{det-cc-opf:zetaf_i}\\
    &(\eta_{i,c}): && a_{1,c} (f_i^P + z_{\epsilon_f}  t_i^f) + a_{2,c} f_i^Q + a_{3,c} S_i^{\max} \leq 0 \hspace{-4cm} && \nonumber \\
    & && && \hspace{-1.9cm} i\in\set{N}^+, c \in \{1,...,12\} \label{det-cc-opf:eta_i} 
    % &(\eta_{i,c}^-): && a_{1,c} (f_i^P - z_{\epsilon_f} t_i^f) + a_{2,c} f_i^Q + a_{3,c} S_i^{\max} \leq 0 \hspace{-4cm} && \nonumber \\
    % & && &&\hspace{-1.9cm} i\in\set{N}^+, c \in \{1,...,12\} \label{det-cc-opf:eta_i-}
\end{align} \vspace{-5mm}
\label{mod:EQV-CC}
\end{subequations}
% \allowdisplaybreaks[0]

{\color{black} %Block LossesDerivation 
\subsection{Extension with Loss Factors}
\label{ssec:extension_with_loss_factors}

The EQV-CC in \eqref{eq:eqvCC} can be extended to incorporate power losses to account for their effect on prices.
For this purpose we derive an approximate linear mapping of nodal net injections into power losses, \cite{yuan2018novel,bai2018distribution,dvorkin2019chance}.
The active and reactive power losses on edge $i$ are given by $l_i r_i$ and $l_i x_i$, where $l_i$ is the squared current on edge $i$.
Using the available demand forecast, we solve:
\allowdisplaybreaks
\begin{subequations} 
\begin{align}
    & && \min_{\substack{\{g_i^P, g_i^Q,\alpha_i\}_{i\in \set{G}}, \\ \{f_i^P, f_i^Q, u_i\}_{i=\set{N}^+}}} 
    \sum_{i=0}^n\left(c_i(g_i^P) + \frac{\alpha_i^2}{2 b_i} s^2\right) \hspace{-3cm} &
    \\
% \end{align}
% \begin{align}
   \text{s.t.} & && \text{\cref{cc-opf:theta_i+,cc-opf:theta_i-,cc-opf:gamma,det-cc-opf:delta_i+,det-cc-opf:delta_i-}} & \nonumber \\
    & && f_i^P + g_i^P - \sum_{j\in \set{C}_i}(f_j^P + l_j r_j) = d_i^P & i\in\set{N} \label{eq:enerbal_with_loss_active}\\
    & && f_i^Q + g_i^Q - \sum_{j\in \set{C}_i}(f_j^Q + l_j x_j) = d_i^Q & i \in \set{N} \label{eq:enerbal_with_loss_reactive} \\
    & && u_i + 2(r_i f_i^P + x_i f_i^Q) + l_i(r_i^2 + x_i^2)  =  u_{\set{A}_i} \hspace{-2cm}& \nonumber \\ 
        & && & i\in\set{N}^+\label{eq:volt_with_losses} \\
    & && \frac{(f_i^P)^2 + (f_i^Q)^2}{u_i} \leq l_i & i\in\set{N}^+ \label{eq:current_squared} \\
    & && u_i^{\min} \leq u_i \leq u_i^{\max} & i\in\set{N}^+ \\
    & && (f_i^P)^2 + (f_i^Q)^2 \leq (S_i^{\max})^2 & i\in\set{N}^+ \\
    & && (f_i^P\!-\!l_i^P\!r_i)^2\!+\!(f_i^Q\!-\!l_i^Q\!x_i)^2\!\!\leq (S_i^{\max})^2 & \!i\in\set{N}^+.
\end{align}%
\label{mod:MOD_BF}%
\end{subequations}%
\allowdisplaybreaks[0]%
\noindent
Note that \cref{mod:MOD_BF} uses a  SOC-relaxed branch flow model, \cite{farivar2013branch,papavasiliou2018analysis}, which accounts for the power losses in  \cref{eq:enerbal_with_loss_active,eq:enerbal_with_loss_reactive}, and is modified to include decision variables $\alpha_i, i \in \set{G}$ and the (linear) deterministic equivalents of the generation chance constraints \cref{det-cc-opf:delta_i+,det-cc-opf:delta_i-} to compute reserve $z_{\epsilon_g}s\alpha_i,  i \in\set{G}$. The solution of \cref{mod:MOD_BF} is used  below as a linearization point to compute loss factors. We denote this linearization point as  $\{\bar{g}_i^{P},  i \in \set{G}; \bar{g}_i^{Q},  i \in \set{G}; \bar{f}_i^{P},  i \in \set{N}^+; \bar{f}_i^{Q},  i \in \set{N}^+; \bar{u}_i,  i \in \set{N}; \bar{l}_i,  i \in \set{N}^+\}$. 

Since the power losses of each edge $j$ in \cref{eq:enerbal_with_loss_active,eq:enerbal_with_loss_reactive}  are allocated to its upstream node $\set{A}_j$, terms  $\sum_{j\in\set{C}_i} l_j r_j$, $\sum_{j\in\set{C}_i} l_j x_j$ at each node $i$ can be interpreted as a (additional) fictitious nodal demand (FND) at node $i$, \cite{bai2018distribution}. 
To approximate the FND around the linearization point, we first obtain the sensitivity of current $\bar{l}_i(\bar{f}^P_i, \bar{f}^Q_i, \bar{u}_i)$ at the linearization point with respect to changes in active and reactive nodal net demand and production. 
Assuming that \cref{eq:current_squared} is tight at the optimum of \cref{mod:MOD_BF}, \cite{farivar2013branch}, we compute:
\begin{align}
    L_{ik}^P \coloneqq \frac{\partial \bar{l}_i}{\partial d_k^P} = - \frac{\partial \bar{l}_i}{\partial g_k^P} = \left(2 \bar{f}_i^P A_{ik} + 2 \bar{f}_i^Q A_{ik}  \right) \frac{1}{\bar{u}_i} \label{eq:current_sensit_active}\\
    L_{ik}^Q \coloneqq\frac{\partial \bar{l}_i}{\partial d_k^Q} = - \frac{\partial \bar{l}_i}{\partial g_k^Q} = \left(2 \bar{f}_i^P A_{ik} + 2 \bar{f}_i^Q A_{ik}  \right) \frac{1}{\bar{u}_i}, \label{eq:current_sensit_reactive}
\end{align}
where $L_{ik}^P$ and $L_{ik}^Q$ define the sensitivity of power losses of edge $i$ to  active and reactive power changes  at node~$k$.
Next, using \cref{eq:current_sensit_active,eq:current_sensit_reactive}, we can find the sensitivity of the active FND at node $i$ to active and reactive net demand deviations from the linearization point at  node $j$ as:
\begin{align}
    \LP_{ij}^P = \sum_{k\in\set{C}_i} L_{kj}^{P} r_k, \quad \LP_{ij}^Q = \sum_{k\in\set{C}_i} L_{kj}^{Q} r_k \label{eq:LP_entries}
\end{align}
and the sensitivity of reactive FND at node $i$ to active and reactive net demand deviations from the linearization point at  node $j$ as:
\begin{align}
    \LQ_{ij}^P = \sum_{k\in\set{C}_i} L_{kj}^{P} x_k, \quad \LQ_{ij}^Q = \sum_{k\in\set{C}_i} L_{kj}^{Q} x_k. \label{eq:LQ_entries}
\end{align}
Since the forecast demand is fixed,  the linearized FND only depends on the deviation of active and reactive production levels $(g^{P}_i - \bar{g}_i^{P}),  i \in \set{G}$, and $(g^{Q}_i - \bar{g}_i^{Q}),  i \in \set{G}$.
Thus, the  loss-aware nodal power balance constraints are given as:
\begin{align}
    &(\lambda^P_i): \!\!&&\!\! f_i^P + g_i^P -\!\sum_{j\in \set{C}_i}(f_j^P + \bar{l}_j r_j) + \Ploss_i(g^P, g^Q) = d_i^P  \label{eq:enerbal_with_approx_loss_active}\\
    &(\lambda^Q_i): \!\!&&\!\! f_i^Q - g_i^Q -\!\sum_{j\in \set{C}_i}(f_j^Q + \bar{l}_j x_j) + \Qloss_i(g^P, g^Q) = d_i^Q, \label{eq:enerbal_with_approx_loss_reactive}
\end{align}
where:
\begin{align}
    &\Ploss_i(g^P, g^Q) \coloneqq \sum_{j\in \set{G}}(\LP^P_{ij}(g_j^P - \bar{g}_j^P) + \LP^Q_{ij}(g_j^Q - \bar{g}_j^Q)) \\
    &\Qloss_i(g^P, g^Q) \coloneqq \sum_{j\in \set{G}}(\LQ^P_{ij}(g_j^P - \bar{g}_j^P) + \LQ^Q_{ij}(g_j^Q - \bar{g}_j^Q)).
\end{align}

To determine the impact of power losses with respect to uncertainty $\om$, we define matrices $\LP^P$ and $\LQ^P$ with elements $\LP^P_{ij}$ and $\LQ^P_{ij}$ given by \cref{eq:LP_entries,eq:LQ_entries}. 
Using these matrices, we define the loss-aware extensions of matrices $A$ and $R$ denoted as $A^L$ and $R^L$:
\begin{align}
    A^L &\coloneqq A (I + \LP^P) \label{eq:A_lossaware}\\
    R^L &\coloneqq A^{\!\top} (\diag(r) A^L + \diag(x) A \LQ^P) \label{eq:R_lossaware}.
\end{align}
Therefore, the loss-aware modification of the EQV-CC is obtained by substituting $A$ with $A^L$ and $R$ with $R^L$ and extending the nodal power balances as given in \cref{eq:enerbal_with_approx_loss_active,eq:enerbal_with_approx_loss_reactive}.
The AC-CCOPF model with power losses is presented in detail in Section~\ref{ssec:DLMP_with_losses}, see \eqref{mod:LVOLT-CC}.
}%blue

\section{DLMPs with Chance-Constrained  Limits}
\subsection{DLMPs with Chance-Constrained  Generation Limits}
\label{ssec:DLMPS_with_chance_constrained_generation_limits}

In this subsection, we consider a modification of the  EQV-CC in \eqref{eq:eqvCC} that models chance constraints on the generation outputs in \eqref{det-cc-opf:delta_i+}--\eqref{det-cc-opf:delta_i-} and other constraints are considered deterministically. This modification is given below:
\allowdisplaybreaks
\begin{subequations} \label{eq:GEN-CC}
\begin{align}
    &\text{GEN-CC:} && \min_{\substack{\{g_i^P, g_i^Q,\alpha_i\}_{i\in N}, \\ \{f_i^P, f_i^Q, u_i\}_{i=N^+}}} 
    \sum_{i=0}^n\left(c_i(g_i^P) + \frac{\alpha_i^2}{2 b_i} s^2\right) 
\end{align}

\begin{align}
    %  \nonumber \\
   \text{s.t.} & && \text{\cref{cc-opf:lambda_0,cc-opf:lambda_i,cc-opf:pi_0,cc-opf:pi_i,cc-opf:beta_i,cc-opf:theta_i+,cc-opf:theta_i-,cc-opf:gamma}} \nonumber \\
    & && \text{\cref{det-cc-opf:delta_i+,det-cc-opf:delta_i-}}  \nonumber \\
    &(\mu_i^+): && u_i \leq u_i^{\max} & i\in\set{N}^+ \label{GEN-CC:mu_i+}\\
    &(\mu_i^-): && -u_i \leq -u_i^{\min} & i\in\set{N}^+  \label{GEN-CC:mu_i-} \\
    &(\eta_i): && (f_i^P)^2 + (f_i^Q)^2 \leq (S_i^{\max})^2 & i\in\set{N}^+  \label{GEN-CC:eta_i}.
\end{align}
\label{mod:GEN-CC}
\end{subequations}
\allowdisplaybreaks[0]

We use the GEN-CC to compute the power and balancing regulation prices, which are given by dual multiplier $\lambda^P_i$ and $\lambda^Q_i$ of the power balance constraint \cref{cc-opf:lambda_i} and \cref{cc-opf:pi_i}, as well as dual multiplier $\gamma$ of the system-wide balancing regulation condition in  \cref{cc-opf:gamma}. Thus, we formulate and prove:

\begin{prop}
\label{prop:price_decomp}
Consider the GEN-CC in \eqref{eq:GEN-CC}. Let $\lambda^P_i$ and $\lambda^Q_i$ be the active and reactive power prices defined as dual multipliers of constraints \eqref{cc-opf:lambda_i} and \eqref{cc-opf:pi_i}. Then  $\lambda^P_i$ and $\lambda^Q_i$ are given by the following functions:
\allowdisplaybreaks
\begin{align}
    \lambda^P_i &= \lambda^P_{\set{A}_i} + (\lambda^Q_i - \lambda^Q_{\set{A}_i}) \frac{r_i}{x_i} 
    - 2\eta_i\left(f_i^P + \frac{r_i}{x_i} f_i^Q\right) \label{eq:active_power_price} \\
    \lambda^Q_i &= \lambda^Q_{\set{A}_i} + (\lambda^P_i - \lambda^P_{\set{A}_i}) \frac{x_i}{r_i} 
    - 2\eta_i\left(f_i^Q + \frac{x_i}{r_i} f_i^P\right) \label{eq:reactive_power_price_rec},
\end{align}
where $\eta_i$  is a dual multiplier of \eqref{GEN-CC:eta_i}.
\end{prop}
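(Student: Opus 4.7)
The proof plan is to derive both identities directly from the Karush-Kuhn-Tucker stationarity conditions of the GEN-CC problem. The primal variables that couple the price at node $i$ with the price at its parent $\set{A}_i$ are precisely the flow variables $f_i^P$ and $f_i^Q$ on edge $i$, since each of them appears in four constraints: the nodal power balance at node $i$ (with dual $\lambda^P_i$ or $\lambda^Q_i$), the nodal power balance at $\set{A}_i$ (with dual $\lambda^P_{\set{A}_i}$ or $\lambda^Q_{\set{A}_i}$, where $f_i^P$ and $f_i^Q$ enter with a negative sign through the child-sum $\sum_{j\in\set{C}_{\set{A}_i}} f_j^P$), the voltage drop equation \cref{cc-opf:beta_i} on edge $i$ (with dual $\beta_i$), and the apparent power flow limit \cref{GEN-CC:eta_i} on edge $i$ (with dual $\eta_i$).

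First, I would form the Lagrangian of the GEN-CC and compute the stationarity conditions $\partial L / \partial f_i^P = 0$ and $\partial L / \partial f_i^Q = 0$ for $i\in\set{N}^+$. Keeping only the terms in which $f_i^P$ appears, the first condition reduces to a linear identity of the schematic form
\begin{equation*}
\lambda^P_i - \lambda^P_{\set{A}_i} + 2 r_i \beta_i + 2 \eta_i f_i^P = 0,
\end{equation*}
and symmetrically the $f_i^Q$-stationarity gives
\begin{equation*}
\lambda^Q_i - \lambda^Q_{\set{A}_i} + 2 x_i \beta_i + 2 \eta_i f_i^Q = 0.
\end{equation*}
The crucial observation is that $\beta_i$ is the \emph{only} shared unknown between these two equations, because all other constraints ($\mu_i^\pm$, $\theta_i^\pm$, $\delta_i^\pm$, $\gamma$, the cost, and the generation balance) do not depend on $f_i^P$ or $f_i^Q$.

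Next, I would eliminate $\beta_i$ by solving the first equation for it and substituting into the second (or, symmetrically, vice versa). Solving for $\beta_i$ from either stationarity condition and equating the two expressions yields a single linear relation among $\lambda^P_i$, $\lambda^P_{\set{A}_i}$, $\lambda^Q_i$, $\lambda^Q_{\set{A}_i}$, $\eta_i$, $f_i^P$, $f_i^Q$, $r_i$, and $x_i$. Rearranging this identity to isolate $\lambda^P_i$ produces \eqref{eq:active_power_price}, while isolating $\lambda^Q_i$ produces \eqref{eq:reactive_power_price_rec}. Both forms are equivalent and are presented simply because either direction of recursion may be convenient when traversing the tree $\Gamma(\set{N},\set{L})$ from the root outward.

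The main obstacle I anticipate is careful bookkeeping of signs and of the indexing convention for parent/child edges: the flow $f_i^P$ enters the balance at node $i$ with coefficient $+1$ and the balance at $\set{A}_i$ with coefficient $-1$ (through the term $-\sum_{j\in\set{C}_{\set{A}_i}} f_j^P$), and the sign of $\beta_i$ depends on how the voltage-drop equation \cref{cc-opf:beta_i} is oriented in the Lagrangian. Once these conventions are fixed consistently, the rest of the argument is a routine linear elimination, and no use is made of complementary slackness, duality gap, or the specific structure of the voltage/generation constraints. Note that the active and reactive voltage/generation limits enter only through their duals $\mu_i^\pm$, $\delta_i^\pm$, $\theta_i^\pm$, which do not appear in the stationarity with respect to $f_i^P$ or $f_i^Q$ and therefore do not intrude on the recursion.
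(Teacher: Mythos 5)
Your proposal follows exactly the paper's own route: write the KKT stationarity conditions with respect to $f_i^P$ and $f_i^Q$ (which are precisely \eqref{equiv-kkt:fP_i} and \eqref{equiv-kkt:fQ_i}), observe that $\beta_i$ is the only shared multiplier, eliminate it, and solve for $\lambda^P_i$ or $\lambda^Q_i$. The argument is correct and essentially identical to the paper's proof, which it in fact spells out in slightly more detail than the paper's one-line ``instantly yields.''
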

\allowdisplaybreaks[0] 
\begin{proof}
The Karush-Kuhn-Tucker (KKT) optimality conditions for the GEN-CC in \eqref{eq:GEN-CC} are:
\allowdisplaybreaks
\begin{subequations}
\begin{align}
    &(g_i^P): && \frac{(g_i + a_i)}{b_i} + \delta_i^+ - \delta_i^- - \lambda^P_i = 0 && i\in\set{G} \label{equiv-kkt:gP_i}\\
    &(g_i^Q): &&  \theta_i^+ - \theta_i^- - \lambda^Q_i = 0 && i\in\set{G} \label{equiv-kkt:gQ_i}\\
    &(u_i): && \beta_i - \sum_{j\in\set{C}_i} \beta_j + \mu_i^+ - \mu_i^- = 0 && i\in\set{N}^+ \label{equiv-kkt:u_i}\\
    &(f_i^P): && \lambda^P_i - \lambda^P_{\set{A}_i} + 2r_i \beta_i + 2 f_i^P \eta_i = 0 && i\in\set{N}^+ \label{equiv-kkt:fP_i}\\
    &(f_i^Q): && \lambda^Q_i - \lambda^Q_{\set{A}_i} + 2x_i \beta_i +  2 f_i^Q \eta_i= 0 && i\in\set{N}^+ \label{equiv-kkt:fQ_i}\\
    &(\alpha_i): && \frac{\alpha_i}{b_i} s^2 + z_{\epsilon_g} s(\delta_i^+ +\delta_i^-) - \gamma = 0 && i\in\set{N}^+ \label{equiv-kkt:alpha_i}\\
    &(\alpha_0): && \frac{\alpha_0}{b_0} s^2 - \gamma = 0 & \label{equiv-kkt:alpha_0}
\end{align}
\vspace*{-0.4cm}
\begin{align}
& 0 \leq \delta_i^+ \bot g_i^P + z_{\epsilon_g} \alpha_i s - g_i^{P,\max} \geq 0 && i\in\set{G} \label{equiv-kkt:delta_i+} \\ 
& 0 \leq \delta_i^- \bot -g_i^P + z_{\epsilon_g} \alpha_i s + g_i^{P,\min} \geq 0 && i\in\set{G} \label{equiv-kkt:delta_i-} \\ 
& 0 \leq \theta_i^+ \bot g_i^Q - g_i^{Q,\max} \geq 0 && i\in\set{G} \label{equiv-kkt:theta_i+}\\
& 0 \leq \theta_i^- \bot {-g_i^Q} + g_i^{Q,\min} \geq 0 && i\in\set{G} \label{equiv-kkt:theta_i-}\\
& 0 \leq \mu_i^+ \bot +u_i - u_i^{\max} \geq 0 && i\in\set{N}^+ \label{equiv-kkt:mu_i+}\\
& 0 \leq \mu_i^- \bot {-u_i} + u_i^{\min} \geq 0 && i\in\set{N}^+ \label{equiv-kkt:mu_i-} 
\\ & 0 \leq \eta_i \bot (f_i^P)^2 + (f_i^Q)^2 - (S_i^{\max})^2 \geq 0 && i\in\set{N}^+ \label{equiv-kkt:eta_i}
\end{align}%
\end{subequations}%
\allowdisplaybreaks[0]%
Expressing $\lambda^P_i$ and $\lambda^Q_i$ from \cref{equiv-kkt:fP_i,equiv-kkt:fQ_i} instantly yields the expressions in  \cref{eq:active_power_price,eq:reactive_power_price_rec}.
\end{proof}
% \vspace*{-0.3cm}

\begin{remark}
\textcolor{black}{ % Block T+D final
Eqs.~\cref{eq:active_power_price,eq:reactive_power_price_rec} can also be used to couple DLMPs and  transmission locational marginal prices (LMP) for active and, if available,  reactive power obtained from wholesale market-clearing outcomes. Indeed, transmission LMPs can be  parameterized in \cref{eq:active_power_price,eq:reactive_power_price_rec} as prices at the root node, i.e. $\lambda^P_{0}$ and $\lambda^Q_{0}$.}
\end{remark}

Proposition \ref{prop:price_decomp} allows for multiple  insights on the price formation process. First, both $\lambda^P_i$  and $\lambda^Q_i$ do not explicitly depend on uncertainty and risk parameters. Next, as the second terms in \eqref{eq:active_power_price}  and \eqref{eq:reactive_power_price_rec}  reveal, $\lambda^P_i$  and $\lambda^Q_i$ are mutually dependent. Furthermore,  the third terms in  \eqref{eq:active_power_price}  and \eqref{eq:reactive_power_price_rec} demonstrates that $\lambda^P_i$  and $\lambda^Q_i$ are both equally dependent on active and reactive power flows  $f^P_i$  and $f^Q_i$, as well as edge characteristics $r_i$ and $x_i$. Finally, if the distribution system is not power-flow-constrained, i.e. $\eta_i=0$ and \eqref{GEN-CC:eta_i} is not binding, the third terms disappear  in  \eqref{eq:active_power_price}  and \eqref{eq:reactive_power_price_rec}. However, even in this case the DLMPs at different nodes would not be the same due the need to provide both reactive and active power.  

{\color{black} % Block NeighborNodeInterpretation final
Since Proposition~\ref{prop:price_decomp} relates prices $\lambda_i^P$ and $\lambda_i^Q$ at neighboring nodes, it implies that changing a real power injection at any node $i$ can be compensated by active and reactive power  adjustments at either the ancestor node or that node without any other changes in the system.
However,  similarly to the discussion in  \cite{papavasiliou2018analysis}, the physical dependencies between the nodes are more complex and net injection changes at one node will shift operating conditions at all other nodes. 
This becomes clear when we reinterpret the results of Proposition~\ref{prop:price_decomp} in terms of the voltage limits given by \cref{GEN-CC:mu_i+,GEN-CC:mu_i-}.
} 
For this purpose we express $\beta_i$ from  \eqref{equiv-kkt:u_i} and use it in \eqref{equiv-kkt:fP_i} and \eqref{equiv-kkt:fQ_i}. Expressing  $\lambda^P_i$ and  $\lambda^Q_i$ from \eqref{equiv-kkt:fP_i} and \eqref{equiv-kkt:fQ_i} leads to:
\begin{align}
    \lambda^P_i = \lambda^P_{\set{A}_i} - 2r_i\sum_{j\in\set{D}_i}(\mu_j^+ - \mu_j^-) + 2 f_i^P \eta_i \label{eq:lambda_i_decomp}\\
    \lambda^Q_i = \lambda^Q_{\set{A}_i} - 2x_i\sum_{j\in\set{D}_i}(\mu_j^+ - \mu_j^-) + 2 f_i^Q \eta_i. \label{eq:pi_i_decomp}
\end{align}
Thus, if voltage limits are binding at downstream nodes $j\in \mathcal{D}_i$ of node $i$, i.e. $\mu_j^+\neq0$ or $\mu_j^-\neq 0$, they will contribute to the resulting values of   $\lambda^P_i$ and  $\lambda^Q_i$. Furthermore, expressions in \eqref{eq:lambda_i_decomp} and \eqref{eq:pi_i_decomp} show that if the distribution system is not voltage- or power-flow-congested, i.e. ${\mu_i^+ = \mu_i^- =\eta_i=0,  i \in \mathcal N}$, DLMPs reduce to system-wide prices equal to the prices at the root node, i.e.  $\lambda^P_i=  \lambda^P_0$ and  $\lambda^Q_i=  \lambda^Q_0$.

Unlike $\lambda^P_i$ and  $\lambda^Q_i$, we find that  the price for balancing regulation explicitly depends on uncertainty and risk parameters:
\begin{prop}
\label{prop:gamma}
%Let $\gamma$ be the system wide price for regulation, then $\gamma$ is given as:
Consider the GEN-CC  in \eqref{eq:GEN-CC}. Let $\gamma$ be the  balancing regulation price defined as a dual multiplier of constraint \eqref{cc-opf:gamma}. Then  the following function defines $\gamma$:
\begin{align}
    \gamma = \frac{s}{\sum_{i=0}^n b_i}\left( s + z_\epsilon  \sum_{i=1}^n (\delta_i^+ + \delta_i^-)b_i \right). 
    \label{eq:gamma}
\end{align}

\end{prop}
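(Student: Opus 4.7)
The plan is to derive the expression for $\gamma$ directly from the KKT stationarity conditions with respect to the participation factors $\alpha_i$ together with the balancing regulation constraint \eqref{cc-opf:gamma}. Since Proposition~\ref{prop:price_decomp} already provides the KKT system for the GEN-CC, I can reuse \eqref{equiv-kkt:alpha_i} and \eqref{equiv-kkt:alpha_0} without rederivation.

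First, I would isolate each $\alpha_i$ in closed form. From \eqref{equiv-kkt:alpha_0} I obtain $\alpha_0 = b_0 \gamma / s^2$, and from \eqref{equiv-kkt:alpha_i} I obtain, for every $i \in \set{N}^+$,
\begin{equation*}
    \alpha_i = \frac{b_i}{s^2}\bigl(\gamma - z_{\epsilon_g}\, s\,(\delta_i^+ + \delta_i^-)\bigr).
\end{equation*}
This step is pure algebra and exploits the fact that $b_i > 0$ and $s > 0$ so the divisions are well defined.

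Next, I would substitute these expressions into the balancing constraint \eqref{cc-opf:gamma}, i.e.\ $\alpha_0 + \sum_{i=1}^n \alpha_i = 1$. Collecting the terms proportional to $\gamma$ on one side gives
\begin{equation*}
    \frac{\gamma}{s^2}\sum_{i=0}^n b_i \;=\; 1 \;+\; \frac{z_{\epsilon_g}}{s}\sum_{i=1}^n b_i\,(\delta_i^+ + \delta_i^-),
\end{equation*}
and multiplying by $s^2/\sum_{i=0}^n b_i$ yields exactly \eqref{eq:gamma}, identifying $z_\epsilon$ with $z_{\epsilon_g}$ as in the generation chance constraints.

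I do not anticipate a genuine obstacle here: the derivation is linear in the dual variables and only requires that $\sum_{i=0}^n b_i > 0$, which holds by assumption on the cost parameters. The one point worth flagging in the write-up is the interpretation of the resulting formula, namely that $\gamma$ decomposes into a baseline term $s^2/\sum_i b_i$ reflecting the system-wide forecast variability and an additive term driven by the activity of generation chance constraints through $\delta_i^+ + \delta_i^-$, scaled by the quantile $z_{\epsilon_g}$. This highlights the explicit dependence of the balancing regulation price on both the uncertainty ($s$) and the DSO's chosen risk level ($\epsilon_g$), in contrast with $\lambda_i^P$ and $\lambda_i^Q$ from Proposition~\ref{prop:price_decomp}.
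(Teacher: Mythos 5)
Your proposal is correct and follows essentially the same route as the paper: express $\alpha_0$ and $\alpha_i$ in terms of $\gamma$ from the stationarity conditions \eqref{equiv-kkt:alpha_0} and \eqref{equiv-kkt:alpha_i}, substitute into \eqref{cc-opf:gamma}, and solve for $\gamma$. Your sign bookkeeping is in fact cleaner than the paper's displayed intermediate equation, and your final expression matches \eqref{eq:gamma}.
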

\begin{proof}
Expressing $\alpha_i$ and $\alpha_0$ from \eqref{equiv-kkt:alpha_i}  and \eqref{equiv-kkt:alpha_0} in terms of $\gamma$ and using it \eqref{cc-opf:gamma} yields:
% and
% \begin{align}
%     \sum_{i=1}^n \alpha_i^* = 1 - \alpha_0^* \label{eq:optimal_alpha_balance}.
% \end{align}
%Using \cref{equiv-kkt:alpha_i,equiv-kkt:alpha_0,cc-opf:gamma} we get:
% \vspace{-0.4cm}
\begin{align}
    1 + \frac{\gamma b_0}{s^2} = -\sum_{i=1}^n\left[\gamma + z_{\epsilon_g} s(\delta_i^+ + \delta_i^-)\right] \frac{b_i}{s^2}, 
\end{align}
which immediately leads to  \cref{eq:gamma}.
\end{proof}
% \vspace*{-0.3cm}

As per  \eqref{eq:gamma}, $\gamma$ depends on uncertainty, since ${s^2 \coloneqq e^{\!\top}\Sigma e}$, as well as  risk tolerance of the DSO, since $z_{\epsilon_g} = \Phi^{-1} (1-\epsilon_g)$. 
Notably, the balancing regulation price is always non-zero if there is uncertainty in the system (i.e. $s \neq 0$).
This is true even if none of the chance constraints on output limits of DERs in \eqref{cc-opf:delta_i+}-\eqref{cc-opf:delta_i-} are binding, i.e.  $\delta_i^{+}=\delta_i^{-}=0,  i \in \mathcal N$. 
In other words, as long as the forecast is not perfect, there is a value on procuring a non-zero amount of balancing regulation.

%\subsection{Equilibrium Prices}

\begin{figure}[b]
    \centering
    \includegraphics[width=0.95\linewidth]{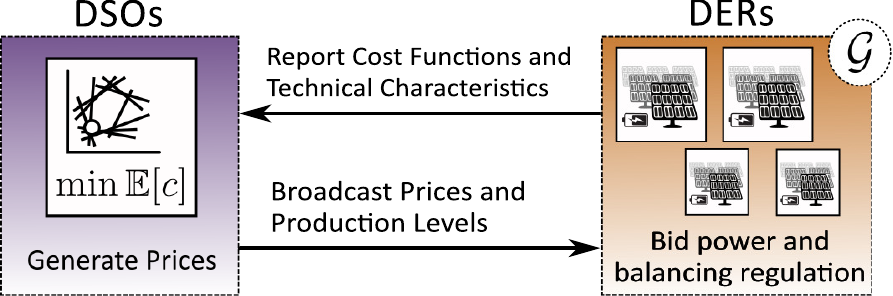}
    \caption{A schematic representation of the auction.}
    \label{fig:auction_illusttration}
\end{figure}

{\color{black} % Block AuctionExplanation final
The prices resulting from Propositions~\ref{prop:price_decomp} and~\ref{prop:gamma} can be leveraged by the DSO to organize a stochastic distribution electricity market, e.g. via a centralized auction.
Fig.~\ref{fig:auction_illusttration} illustrates such an auction, where, first, producers  truthfully report their cost functions and technical characteristics to the DSO. 
Next, the DSO determines the optimal dispatch decisions and resulting prices for each market product \textcolor{black}{using the best available forecast information}. 
Finally, these decisions and prices are communicated by the DSO to all producers.
For this stochastic electricity market, we define a \textit{competitive equilibrium} as a set of production levels and prices $\{g_i^{P}, i \in \set{G}; \alpha_i,i \in \set{G}; \pi^g_i, i \in \set{G}; \pi^\alpha\}$ that (i) clears the market so that the production and demand quantities are balanced and $\sum_{i\in\set{G}}\alpha_i =1$ and (ii) maximizes the profit of all producers, under the market payment structured as $\pi^g g^p_i  + \pi^\alpha \alpha_i$,  so that there is no incentive to deviate from the market outcomes. 
}

To show that the prices from Propositions~\ref{prop:price_decomp} and~\ref{prop:gamma} support the competitive equilibrium, we consider the GEN-CC in \eqref{eq:GEN-CC} and the  behavior of each producer (controllable DER) is modeled as \textcolor{black}{a risk-neutral}, profit-maximization: 
\begin{align}
& \Big\{ \max_{g_i^P, \alpha_i}  \Pi_i = \overbrace{\pi^g_i g_i^P  + \pi^\alpha \alpha_i\vphantom{\frac{s^2}{2 b_i}}}^{\text{\textcolor{black}{Payment}}}  \overbrace{- c_i(g_i^P)   -\alpha_i^2 \frac{s^2}{2 b_i}}^{\text{Cost}}   \label{eq:individual_generator} \\
&\text{s.t} ~ (\delta_i^-,\delta_i^+): ~
g_i^{P,\min}\! +\! z_{\epsilon_g} \alpha_i s \leq g_i^P \leq g_i^{P,\max}\!-  z_{\epsilon_g} \alpha_i s  \Big\} \nonumber  \\
& \hspace{7.5cm}  i \in \set{G}, \nonumber 
\end{align}
% \end{subequations}
where $\Pi_i$  denotes the profit function of each controllable DER at node $i$ and $\{\pi^g_i, \pi^\alpha\}$  are active power and balancing regulation prices.  

{\color{black}
\begin{remark}
Since uncertainty and risk parameters, i.e. $s^2=e^{\!\top}\Sigma e$ and $z_{\epsilon_g}$, are shared by the DSO and producers, we assume that this knowledge is common and consensual. Although these parameters can be exploited by  the DSO to advance their self-interest and increase security margins above reasonable levels at the expense of customers, this behavior can be mitigated using benchmarking and performance-based rate design practices, \cite{wood2016recovery,utilityofthefuture,farsi2005benchmarking}. 
\end{remark}
}

Considering this stochastic market, as in Fig.~\ref{fig:auction_illusttration}, we prove:

% \vspace{-0.2cm}
\begin{theorem}
\label{prop:competive_equilibrium}
Let $\{g_i^{P,*}, \alpha_i^*,  i \in \mathcal G\}$ be an optimal solution of  the GEN-CC  in \eqref{eq:GEN-CC} and let $\{\lambda^{P,*}_i, i \in \mathcal N; \gamma^*\}$ be the  dual variables of \eqref{cc-opf:lambda_i} and \eqref{cc-opf:pi_i}, 
{\color{black} % Block PriceSetTheorem1 final
then the set of production levels and prices $\{g_i^{P,*}, i \in \set{G}; \alpha_i^*,i \in \set{G}; \pi^g_i, i \in \set{G}; \pi^\alpha\}$ is a competitive equilibrium if $\pi_i^g = \lambda^{P,*}_i, i \in \mathcal G,$ and $\pi^{\alpha} = \gamma^*$.}
\end{theorem}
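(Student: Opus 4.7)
The plan is to establish the competitive equilibrium by matching the Karush--Kuhn--Tucker (KKT) conditions of each producer's profit-maximization problem \eqref{eq:individual_generator} against the KKT conditions of the centralized GEN-CC, evaluated at the candidate prices $\pi_i^g = \lambda_i^{P,*}$ and $\pi^\alpha = \gamma^*$. The producer's program is a maximization of a concave quadratic (its cost $c_i(g_i^P)+\alpha_i^2 s^2/(2b_i)$ is convex because $b_i>0$ and $s^2 = e^{\!\top}\Sigma e \geq 0$) over a linear feasible set, and GEN-CC is a convex program, so KKT conditions are both necessary and sufficient for each. Hence if the two KKT systems coincide at $(g_i^{P,*},\alpha_i^*)$ with compatible multipliers $(\delta_i^{+,*},\delta_i^{-,*})$, then the proposed prices support profit maximization while the primal feasibility of GEN-CC guarantees market clearing.

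First, I would write out the stationarity and complementary slackness conditions of \eqref{eq:individual_generator}. The stationarity conditions are
\begin{align}
(g_i^P):\ & \pi_i^g - \tfrac{g_i^P + a_i}{b_i} - \delta_i^+ + \delta_i^- = 0, \\
(\alpha_i):\ & \pi^\alpha - \tfrac{\alpha_i}{b_i}s^2 - z_{\epsilon_g} s(\delta_i^+ + \delta_i^-) = 0,
\end{align}
paired with the complementary slackness conditions on the reserve-tightened generation bounds, which are identical in form to \eqref{equiv-kkt:delta_i+}--\eqref{equiv-kkt:delta_i-}. Substituting $\pi_i^g = \lambda_i^{P,*}$ and $\pi^\alpha = \gamma^*$ recovers exactly \eqref{equiv-kkt:gP_i} and \eqref{equiv-kkt:alpha_i}. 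Therefore $(g_i^{P,*},\alpha_i^*)$ together with the GEN-CC optimal dual multipliers $(\delta_i^{+,*},\delta_i^{-,*})$ satisfies the producer's KKT system, so no producer has a profitable deviation at these prices.

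Second, I would verify the market-clearing side of the competitive-equilibrium definition. Primal feasibility of GEN-CC at the optimum directly gives $\sum_{i\in\set{G}} \alpha_i^* + \alpha_0^* = 1$ from \eqref{cc-opf:gamma}, and summing the nodal power-balance constraints \eqref{cc-opf:lambda_0} and \eqref{cc-opf:lambda_i} over $i\in\set{N}$ causes all internal line flows $f_j^P$ to telescope, leaving $\sum_{i\in\set{N}} g_i^{P,*} = \sum_{i\in\set{N}^+} d_i^P$, which is the required quantity balance between supply and demand. Together with the producer optimality shown above, this certifies that $\{g_i^{P,*},\alpha_i^*,\pi_i^g,\pi^\alpha\}_{i\in\set{G}}$ meets both clauses of the definition.

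The main obstacle I anticipate is bookkeeping around the reactive power and reserve variables rather than any deep technical hurdle: the producer's profit in \eqref{eq:individual_generator} does not feature $g_i^Q$, so I would note that the equilibrium statement is restricted to active power and balancing regulation products and that $g_i^{Q,*}$ is cleared separately via \eqref{equiv-kkt:gQ_i} without entering the producer's profit (the generator is indifferent in $g_i^Q$ within its box). A related subtle point is checking that the participation factor variable $\alpha_i$ in the producer problem inherits the implicit $0\le \alpha_i\le 1$ bounds from the reserve-tightened generation constraints (so that writing \eqref{eq:individual_generator} without explicit bounds on $\alpha_i$ is consistent with GEN-CC); this follows because $g_i^{P,\min}\le g_i^{P,\max}$ and $z_{\epsilon_g} s \ge 0$ force $\alpha_i$ into a compact interval, and the stationarity condition \eqref{equiv-kkt:alpha_i} then pins down its sign through $\gamma^*$ and the $\delta_i^{\pm,*}$ multipliers.
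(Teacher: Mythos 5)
Your proposal is correct and follows essentially the same route as the paper: matching the KKT stationarity and complementarity conditions of the individual profit-maximization problem \eqref{eq:individual_generator} against \eqref{equiv-kkt:gP_i}, \eqref{equiv-kkt:alpha_i}, \eqref{equiv-kkt:delta_i+}, and \eqref{equiv-kkt:delta_i-} at the candidate prices $\pi_i^g=\lambda_i^{P,*}$, $\pi^\alpha=\gamma^*$. Your additions (invoking convexity for sufficiency of the KKT conditions, and explicitly verifying the market-clearing clause from primal feasibility of the GEN-CC) are details the paper leaves implicit, not a different argument.
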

\vspace{-0.4cm}
\begin{proof}
The KKT optimality conditions for \eqref{eq:individual_generator} are:
\begin{subequations}
\begin{align}
    &(g_i^P): && \frac{(g_i + a_i)}{b_i} + \delta_i^+ - \delta_i^- - \pi_i^g = 0  \label{eq:kkt_g_i_ind} \\
    &(\alpha_i): && \frac{\alpha_i}{b_i} s^2 + z_{\epsilon_g} s(\delta_i^+ +\delta_i^-) - \pi^{\alpha} = 0 \label{eq:kkt_alpha_i_ind}\\
    & &&0 \leq \delta_i^+ \bot g_i^P + z_{\epsilon_g} \alpha_i s - g_i^{P,\max} \geq 0  \label{eq:kkt_delta_plus_ind} \\ 
    & && 0 \leq \delta_i^- \bot -g_i^P + z_{\epsilon_g} \alpha_i s + g_i^{P,\min} \geq 0. \label{eq:kkt_delta_minus_ind}
\end{align}%
\label{eq:ind_kkts}%
\end{subequations}%
Using \eqref{eq:kkt_g_i_ind}  and \eqref{eq:kkt_alpha_i_ind}, we express $\pi^{g}_i = - \frac{(g_i - a_i)}{b_i} - \delta_i^+ +\delta_i^-$  and $\pi^{\alpha} = -\frac{\alpha_i}{b_i} s^2 - z_{\epsilon_g} s(\delta_i^+ +\delta_i^-)$. Similarly, we express $\lambda^P_i$ and $\gamma$ from  \cref{equiv-kkt:gP_i,equiv-kkt:alpha_i}. Therefore, $\lambda^P_i = - \frac{(g_i - a_i)}{b_i} - \delta_i^+ +\delta_i^-= \pi^{g}_i$  and $\gamma = -\frac{\alpha_i}{b_i} s^2 - z_{\epsilon_g} s(\delta_i^+ +\delta_i^-)= \pi^{\alpha}$.  If $\{g_i^{P,*}, \alpha_i^*,  i \in \set{G}\}$, it follows that    $\lambda^{P,*}_i = \pi^{g}_i$ and $\gamma^* = \pi^{\alpha}_i$, i.e. $\{g_i^{P,*}, i \in \set{G}; \alpha_i^*,i \in \set{G}; \pi^g_i, i \in \set{G}; \pi^\alpha\}$ solves \eqref{eq:individual_generator} and maximizes $\Pi_i$. Therefore, $\{g_i^{P,*}, i \in \set{G}; \alpha_i^*,i \in \set{G}; \pi^g_i, i \in \set{G}; \pi^\alpha\}$ is a competitive equilibrium.
\end{proof}
% \vspace{-0.3cm}

\textcolor{black}{Since both the DSO and producers are modeled as risk-neutral, see \eqref{eq:ccacopf} and \eqref{eq:individual_generator}, and share common knowledge about underlying uncertainty parameters,  the competitive equilibrium  established by Theorem~\ref{prop:competive_equilibrium}  also corresponds to the welfare-maximization (cost-minimization) solution, \cite{doi:10.1137/110851778}. Notably, this property will hold as long as the DSO and producers continue sharing common knowledge about underlying uncertainty parameters,  even if their attitudes toward risk vary based on a given coherent risk measure, \cite{doi:10.1137/110851778}.  Although, from the viewpoint of customers, internalizing the uncertainty and risk parameter in the equilibrium prices from Theorem~\ref{prop:competive_equilibrium} may increase electricity prices relative to the deterministic case, stochasticity-aware prices will  provide incentives to reduce their uncertainty, thus reducing balancing regulation needs in the system,  or  to  exercise  more  flexibility  (e.g.  to  shift  their demand to  time periods with lower DLMPs).}

\textcolor{black}{Hence, using the competitive equilibrium of  Theorem~\ref{prop:competive_equilibrium},} we can analyze the effect of the  prices on the capacity allocation  between the power production and balancing regulation from the perspective of each  producer modeled as in \eqref{eq:individual_generator}. Let $\{\pi^g_i, \pi^\alpha\}$ be given prices and let $\{g_i^{P,*}, \alpha_i^*\}$ be the optimal solution of \eqref{eq:individual_generator} for these prices.  The KKT optimality conditions in  \cref{eq:kkt_g_i_ind,eq:kkt_alpha_i_ind,eq:kkt_delta_plus_ind,eq:kkt_delta_minus_ind} can be used to find parametric functions that determine the optimal  dispatch of each controllable DER. These functions depend on whether constraints in \eqref{eq:individual_generator} are binding or not. Since \eqref{eq:individual_generator} has two inequality constraints, we consider the following four cases:
\begin{enumerate}
    \item  $\delta_i^{+,\ast} = \delta_i^{-,\ast} = 0$: When \eqref{eq:individual_generator} has no binding constraints, it follows from \cref{eq:kkt_g_i_ind,eq:kkt_alpha_i_ind} that:
    \begin{align}
      &    g_i^{P,\ast} = \pi_i^g b_i - a_i, \quad  \alpha^\ast_i = \pi^\alpha b_i/s^2
     \label{eq:uncontrainted}
    \end{align}
    Inserting the optimal dispatch given by \eqref{eq:uncontrainted} into  \cref{eq:kkt_delta_plus_ind,eq:kkt_delta_minus_ind} leads to the following relationship between prices $\pi_i^g$ and $\pi^\alpha$:
    \begin{align}
        \frac{g_i^{\min} + a_i}{b_i} + z_{\epsilon_g} \frac{\pi^\alpha }{s}  \leq \pi^g_i  \leq \frac{g^{\max} + a_i}{b_i} -  z_{\epsilon_g} \frac{\pi^\alpha}{s}. \label{eq:condition_firstcase} 
    \end{align}

    \item  $\delta_i^{+,\ast}\neq 0, \delta_i^{-,*} = 0$: Since $\delta_i^{+,\ast}\neq 0$, only the upper limit is binding. Thus, \eqref{eq:kkt_delta_plus_ind} yields $g_i^{P,\ast} + z_{\epsilon_g} \alpha_i^{\ast} s - g_i^{P,\max} =0$, which in combination with \cref{eq:kkt_g_i_ind,eq:kkt_alpha_i_ind} leads to:
    \begin{subequations} \label{eq:upbind}
    \begin{align}
      &    g_i^{P,\ast} = g_i^{\max} - z_{\epsilon_g} s \alpha_i^*  \\
      &    \alpha_i^* = \frac{z_{\epsilon_g} s (g^{\max} + a_i - b_i \pi_i^g) + \pi^\alpha b_i}{s^2(1+z_{\epsilon_g}^2)}. 
    \end{align}
    \end{subequations}
    With the upper constraint binding,  it follows from \eqref{eq:condition_firstcase} that \eqref{eq:upbind} holds if:
    \begin{align}
        \pi^g_i  \geq \frac{g^{\max} + a_i}{b_i} -  z_{\epsilon_g} \frac{\pi^\alpha}{s}.
    \end{align}

    \item $\delta_i^{+,\ast} = 0, \delta_i^{-,*} \neq0$: This case is the opposite of the previous one since only the lower limit is binding. Therefore, \eqref{eq:kkt_delta_minus_ind} yields $-g_i^{P,\ast} + z_{\epsilon_g} \alpha_i^{\ast} s + g_i^{P,\min} =0$, which in combination with \cref{eq:kkt_g_i_ind,eq:kkt_alpha_i_ind} leads to:
    \begin{subequations}
    \begin{align}
        & g_i^{P,\ast} = g_i^{\min} + z_{\epsilon_g} s \alpha_i^*  \\
        &   \alpha_i^* = \frac{z_{\epsilon_g} s (g^{\min} + a_i - b_i \pi_i^g) - \pi^\alpha b_i}{s^2(1+z_{\epsilon_g}^2)}. 
    \end{align}
    \label{eq:lowbind}
    \end{subequations}
    With the lower constraint binding it follows from \eqref{eq:condition_firstcase} that \eqref{eq:lowbind} holds if:
    \begin{align}
         \pi^g_i \leq \frac{g_i^{\min} + a_i}{b_i} + z_{\epsilon_g} \frac{\pi^\alpha }{s}. 
    \end{align}

    \item $\delta_i^{+,\ast}\neq 0, \delta_i^{-,*} \neq 0 $:
    When both constraints of \eqref{eq:individual_generator} are binding it follows from \cref{eq:kkt_delta_plus_ind,eq:kkt_delta_minus_ind} that:
    % \begin{subequations}
    \begin{align}
        & g_i^{P,\ast} =  \frac{g_i^{\max} + g_i^{\min}}{2}, \quad \alpha_i^* = \frac{g_i^{\max} - g_i^{\min}}{2z_{\epsilon_g}s},
        \label{eq:bothbind}
    \end{align}
    % \end{subequations}
    where $g_i^*$  is the midpoint of the dispatch range and the upward $(g_i^{\max}-g_i^*)$  and downward ($g_i^*-g_i^{\min}$) margins are fully used for providing balancing regulation. In this case,  it follows from \eqref{eq:condition_firstcase} that $\pi^\alpha$ is independent of $\pi^g_i$ and must be as follows:
    \begin{align}
        \pi^\alpha \geq s(g_i^{\max} - g_i^{\min})/2 z_{\epsilon_g} b_i. \label{eq:bothbind_cond}
    \end{align}
    
\end{enumerate}
The  dispatch policies in \cref{eq:uncontrainted,eq:upbind,eq:lowbind,eq:bothbind} support the competitive  equilibrium established by Theorem~\ref{prop:competive_equilibrium} and can be implemented locally at each DER, if there is communication to broadcast prices $\pi_i^g$ and $\pi^\alpha$.

{\color{black}
\begin{remark}
Eqs.~\cref{eq:uncontrainted}--\cref{eq:bothbind_cond} provide a parametric model of the reaction of each producer to given price signals. 
By observing the provided production levels and balancing participation factors for given prices over time, the DSO can use machine learning methods to estimate these parameters.
This enables the DSO to either verify reported cost functions and technical characteristics or to establish a one-way communication market framework as implemented in \cite{mieth2018online} for demand-side management.
\end{remark}
}

\subsection{DLMPs with Chance-Constrained Voltage Limits}

The the GEN-CC in \eqref{eq:GEN-CC} has deterministic voltage limits as given by \eqref{GEN-CC:mu_i+} and \eqref{GEN-CC:mu_i-}. We recast these limits as chance constraints, which leads to the following optimization:
\allowdisplaybreaks
\begin{subequations}
\begin{align}
    &\text{VOLT-CC:} \min_{\substack{\{g_i^P, g_i^Q,\alpha_i\}_{i\in N}, \\ \{f_i^P, f_i^Q, u_i\}_{i=N^+}}} 
    \sum_{i=0}^n\left(c_i(g_i^P) + \frac{\alpha_i^2}{2 b_i} s^2\right) 
\end{align}
\begin{align}
    \text{s.t.} & && \text{\cref{cc-opf:lambda_0,cc-opf:lambda_i,cc-opf:pi_0,cc-opf:pi_i,cc-opf:beta_i,cc-opf:theta_i+,cc-opf:theta_i-,cc-opf:gamma}} \nonumber \\
    & && \text{\cref{det-cc-opf:delta_i+,det-cc-opf:delta_i-,det-cc-opf:zeta_i,det-cc-opf:nu_i,det-cc-opf:mu_i+,det-cc-opf:mu_i-,GEN-CC:eta_i}}. \nonumber
\end{align}
\label{mod:VOLT-CC}
\end{subequations}
\allowdisplaybreaks[0]
Similarly to the GEN-CC in \eqref{eq:GEN-CC} we formulate and prove for the VOLT-CC in \eqref{mod:VOLT-CC} the following proposition:
% \vspace*{-.2cm}
\begin{prop}
\label{prop:gamma_with_voltCC}
Consider the VOLT-CC in \cref{mod:VOLT-CC}. Let $\lambda^P_i$, $\lambda^Q_i$ and $\gamma$ be the active power, reactive power and balancing regulation prices at node $i$. Then $\lambda^P_i$, $\lambda^Q_i$ are given by \eqref{eq:lambda_i_decomp}, \eqref{eq:pi_i_decomp} and $\gamma$ is given by
\begin{equation}
\begin{split}
    \gamma = \frac{s}{\sum_{i=0}^n b_i} \left(
     s +z_\epsilon  \sum_{i=1}^n (\delta_i^+ + \delta_i^-)b_i 
      + \sum_{i=1}^n b_i \nu_i^v \right),
    \label{eq:gamma_with_volt_cc}
\end{split}
\end{equation}
where $\nu_i^v$ is the dual multiplier of \cref{det-cc-opf:nu_i} given as:
\begin{align}
    \nu_i^v = 
      2  z_{\epsilon_v}  \sum_{j=1}^n R_{ji}(\mu_j^+ + \mu_j^-)
     \frac{ R_{j*} (\Sigma e + s^2 \alpha) }{\Stdv[\bm{u}_j(\om,\alpha)]}.
     \label{eq:nu_decomp}
\end{align}
\end{prop}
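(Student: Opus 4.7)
The plan is to apply the KKT optimality conditions to VOLT-CC and read off each price from the appropriate stationarity condition. I first form the Lagrangian of \eqref{mod:VOLT-CC} by appending to the GEN-CC Lagrangian the terms associated with the voltage-related constraints \eqref{det-cc-opf:zeta_i}, \eqref{det-cc-opf:nu_i}, \eqref{det-cc-opf:mu_i+} and \eqref{det-cc-opf:mu_i-}. The crucial observation is that these new constraints introduce only the auxiliary variables $t_i^v$ and $\rho_i^v$ (with multipliers $\zeta_i$ and $\nu_i^v$) and do not explicitly depend on $f_i^P$, $f_i^Q$, $u_i$, $g_i^P$ or $g_i^Q$. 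Therefore the stationarity conditions in these five families of variables are identical to those in GEN-CC, whereas the stationarity in $\alpha_i$ picks up an additional $-\nu_i^v$ contribution through \eqref{det-cc-opf:nu_i}, and two genuinely new conditions appear, in $t_j^v$ and in $\rho_j^v$.

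Because the block of KKT conditions that governs $\lambda_i^P$ and $\lambda_i^Q$ is unchanged, I would repeat the elimination used after Proposition~\ref{prop:price_decomp}: extract $\beta_i$ from the counterpart of \eqref{equiv-kkt:u_i}, insert it into the counterparts of \eqref{equiv-kkt:fP_i} and \eqref{equiv-kkt:fQ_i}, and telescope the resulting recursion along the unique path from the root to node $i$. This reproduces \eqref{eq:lambda_i_decomp} and \eqref{eq:pi_i_decomp} verbatim. For $\gamma$, the updated stationarity in $\alpha_i$ reads $\alpha_i s^2/b_i + z_{\epsilon_g} s (\delta_i^+ + \delta_i^-) - \gamma - \nu_i^v = 0$ for $i \in \set{G}$, while the condition in $\alpha_0$ is unchanged. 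Solving for each $\alpha_i$, substituting into the balance \eqref{cc-opf:gamma} and mimicking the algebra of Proposition~\ref{prop:gamma} immediately yields \eqref{eq:gamma_with_volt_cc}, the only new ingredient being the $\sum_i b_i \nu_i^v$ term.

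The main obstacle is the closed form \eqref{eq:nu_decomp} for $\nu_i^v$. I would handle it in three steps. First, differentiating the Lagrangian in $t_j^v$ gives the simple linear relation $\zeta_j = 2 z_{\epsilon_v} (\mu_j^+ + \mu_j^-)$, because $t_j^v$ enters linearly into \eqref{det-cc-opf:zeta_i} and into the voltage bounds \eqref{det-cc-opf:mu_i+}--\eqref{det-cc-opf:mu_i-}. Second, differentiating in $\rho_j^v$ requires the subgradient of the second-order cone norm in \eqref{det-cc-opf:zeta_i}, namely
\begin{equation*}
\frac{\partial}{\partial \rho_j^v}\bigl\lVert (R_{j*} + \rho_j^v e^{\!\top})\Sigma^{1/2}\bigr\rVert_2 = \frac{(R_{j*} + \rho_j^v e^{\!\top})\Sigma e}{\bigl\lVert (R_{j*} + \rho_j^v e^{\!\top})\Sigma^{1/2}\bigr\rVert_2},
\end{equation*}
after which \eqref{eq:voltage_stdev_conic} replaces the denominator by $\tfrac{1}{2}\Stdv[\bm{u}_j(\om,\alpha)]$, and the identities $\rho_j^v = R_{j*}\alpha$ and $e^{\!\top}\Sigma e = s^2$ collapse the numerator into $R_{j*}(\Sigma e + s^2\alpha)$, producing exactly the scalar factor that appears in \eqref{eq:nu_decomp}. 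Third, to isolate $\nu_i^v$ from the implicit linear system $\sum_i \nu_i^v \check{R}_{ij} = (\text{r.h.s.})_j$, I would multiply by $R_{kj}$ and sum over $j$; the symmetry of $R$ (and therefore of $\check{R}$) combined with $R\check{R} = I$ collapses $\sum_j R_{kj}\check{R}_{ij}$ to a Kronecker delta, leaving a closed form for $\nu_k^v$ as a weighted sum over $j$ that matches \eqref{eq:nu_decomp}. Careful bookkeeping of the matrix symmetries and of sign conventions between inequality and equality multipliers is the most error-prone part of the argument.
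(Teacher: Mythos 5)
Your proposal is correct and follows essentially the same route as the paper: write the KKT conditions of VOLT-CC, observe that the stationarity block governing $\lambda_i^P,\lambda_i^Q$ is unchanged so \eqref{eq:lambda_i_decomp}--\eqref{eq:pi_i_decomp} carry over, rerun the Proposition~\ref{prop:gamma} algebra on the $\alpha_i$-stationarity (now carrying the extra $\nu_i^v$ term) to get \eqref{eq:gamma_with_volt_cc}, and obtain \eqref{eq:nu_decomp} by combining the $t_i^v$- and $\rho_i^v$-stationarity conditions with the norm gradient, \eqref{eq:voltage_stdev_conic}, $\rho_i^v = R_{i*}\alpha$, and the symmetry and invertibility of $R$. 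The only slip is the claim that the new constraints do not involve $u_i$ --- \eqref{det-cc-opf:mu_i+}--\eqref{det-cc-opf:mu_i-} do contain $u_i$, but with the same unit coefficient as the deterministic bounds, so your conclusion that the $u_i$-stationarity is unchanged still holds; your bookkeeping of the factor of $2$ between $\lVert\cdot\rVert_2$ and $\Stdv[\bm{u}_j]$ is in fact more careful than the paper's.
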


\begin{proof}
The KKT optimality conditions for \eqref{mod:VOLT-CC} are:
\allowdisplaybreaks
\begin{subequations}
\begin{align}
    & && \text{\cref{equiv-kkt:gP_i,equiv-kkt:gQ_i,equiv-kkt:u_i,equiv-kkt:fP_i,equiv-kkt:fQ_i,equiv-kkt:alpha_0,equiv-kkt:delta_i+,equiv-kkt:delta_i-,equiv-kkt:theta_i+,equiv-kkt:theta_i-,equiv-kkt:eta_i}} \hspace{-4cm} && \nonumber
 \\
    &(\alpha_i): && \frac{\alpha_i}{b_i} s^2 + z_\epsilon s(\delta_i^+ +\delta_i^-) - \gamma + \nu_i^v = 0 \hspace{-4cm} && \nonumber \\
    & && &&  i\in\set{N}^+ \label{volt-kkt:alpha_i}\\
    &(t_i^v): && 2 z_\epsilon (\mu_i^+ + \mu_i^-) - \zeta_i = 0 && i\in\set{N}^+ \label{volt-kkt:t_i^v}\\
    &(\rho_i^v): && \sum_{j=1}^n \nu^v_j \check{R}_{ji}  + \zeta_i \frac{ (R_{i*} + \rho_i^v e^{\!\top})\Sigma e }{t^v_i} = 0 \hspace{-4cm} && \nonumber \\
    & && && i\in\set{N}^+ \label{volt-kkt:rho_i^v} \\
    & && 0 \leq \mu_i^+ \bot u_i + 2 z_{\epsilon_v} t_i^v - u_i^{\max} \geq 0 && i\in\set{N}^+ \label{volt-kkt:mu_i+}\\
    & && 0 \leq \mu_i^- \bot -u_i 2 z_{\epsilon_v} t_i^v + u_i^{\min} \geq 0. && i\in\set{N}^+ \label{volt-kkt:mu_i-}
\end{align}%
\end{subequations}%
\allowdisplaybreaks[0]%
\noindent
It follows that the expressions for $\lambda_i^P$, $\lambda_i^Q$ are equal to the results of Proposition~\ref{prop:price_decomp}.
Expression~\cref{eq:gamma_with_volt_cc} is obtained analogously to the proof of Proposition~\ref{prop:gamma}.
To find \cref{eq:nu_decomp} we first express $\rho_i$ from \cref{det-cc-opf:nu_i} and $\zeta_i$ from \cref{volt-kkt:t_i^v} and insert these expressions into \cref{volt-kkt:rho_i^v}.
Second, if $\zeta_i \neq 0$, then \cref{det-cc-opf:zeta_i} is tight which means $t^v_i =  \Stdv[\bm{u}_j(\om,\alpha)]$ as per \cref{eq:voltage_stdev_conic}.
Finally, given that $\check{R} = R^{-1}$ as shown in Section~\ref{sssec:voltage_chance_constraints}, \cref{volt-kkt:rho_i^v} can be recast as~\cref{eq:nu_decomp}.
\end{proof}
% \vspace*{-0.4cm}
{\color{black} % Block Prop3Motivation final
Proposition~\ref{prop:gamma_with_voltCC} highlights the difficulty of enforcing probabilistic guarantees on system constraints (e.g. voltage limits) through such individual price signals.
While the structure of prices $\lambda_i^P$, $\lambda_i^Q$ does not change relative to Proposition~\ref{prop:price_decomp}, price $\gamma$ in \eqref{eq:gamma_with_volt_cc} includes $\sum_{i=1}^n b_i \nu_i^v$ in addition to the terms in \eqref{eq:gamma}.
This additional term leads to a discrepancy between the amounts of balancing participation deemed optimal by the DSO, which seeks to minimize the system-wide operating cost,  and by individual producers, which seek to maximize their individual profit.}
Notably, the expression for $\nu_i^v$ in \eqref{eq:nu_decomp} depends on  vector $\alpha$, which includes participation factors at all nodes.  Hence, introducing voltage chance constraints makes balancing regulation price $\gamma$ dependent on the choice of participation factors at all nodes and cannot be explained by purely local or neighboring voltage conditions, even in radial networks. Thus, if node $i$ is such that it has a high influence on the voltage magnitudes at other nodes (i.e. as captured by matrix $R$, see Eq.~\eqref{eq:voltage_reformulated}), the controllable DER at this node is implicitly discouraged from providing balancing regulation and, therefore,  $\nu_i^v$ drives the optimal choice of $\alpha_i$ from the system perspective.  However, since $\nu_i^v$  is not part of  \eqref{eq:individual_generator} and thus uncontrolled by DERs, it will not affect \cref{eq:ind_kkts}. This result shows that internalizing the effect of stochasticity on voltage limits,  which are enforced by the DSO and by producers, will prevent the existence of a competitive equilibrium enforced by Theorem~\ref{prop:competive_equilibrium} and, in this case, balancing participation price $\gamma$ must be adjusted to reflect this difference between the decision-making process of the DSO and controllable DERs.  
Assume $\alpha_i^{*,DSO}$ is the optimal amount of balancing regulation determined by the DSO by solving VOLT-CC. 
If the DSO brodcasts $\pi^\alpha = \gamma^*$ then DER $i$ will decide on its optimal participation $\alpha_i^{*,DER}$ by solving \cref{eq:individual_generator}.
The resulting difference between those balancing participation factors can then be quantified as:
\begin{align}
    \alpha_i^{*,DER} - \alpha_i^{*,DSO} = \frac{b_i}{s} \nu_i^v  \label{difference}
\end{align}
Note that \eqref{difference} is inversely proportional to the total uncertainty in the distribution system (note that $s= \sqrt{e\Sigma e^{\!\top}}$), i.e. the discrepancy between the DER and DSO perspectives decreases as more uncertainty is observed.

{\color{black} % Block LossesDLMPs 
\subsection{DLMPs with losses}
\label{ssec:DLMP_with_losses}
To asses the effect of power losses on DLMPs, we consider the following optimization problem: 
\allowdisplaybreaks
\begin{subequations}
\begin{align}
    &\text{LVOLT-CC:} \min_{\substack{\{g_i^P, g_i^Q,\alpha_i\}_{i\in N}, \\ \{f_i^P, f_i^Q, u_i\}_{i=N^+}}} 
    \sum_{i=0}^n\left(c_i(g_i^P) + \frac{\alpha_i^2}{2 b_i} s^2\right) 
\end{align}
\begin{align}
    \text{s.t.} & && \text{\cref{eq:enerbal_with_approx_loss_active,eq:enerbal_with_approx_loss_reactive,cc-opf:beta_i,cc-opf:theta_i+,cc-opf:theta_i-,cc-opf:gamma}}\hspace{-4cm}  && \nonumber \\
    & && \text{\cref{det-cc-opf:delta_i+,det-cc-opf:delta_i-,det-cc-opf:mu_i+,det-cc-opf:mu_i-,GEN-CC:eta_i}} \hspace{-4cm} && \nonumber\\
     &(\zeta_i): && t_i^v \geq \norm{(R_{i*}^L + \rho_i^v e^{\!\top})\Sigma^{1/2}}_2 && i\in\set{N}^+ \label{lvolt-cc-opf:zeta_i} \\
    &(\nu_i^v): && \sum_{j=1}^n \check{R}_{ij}^L \rho_j^v = \alpha_i, && i\in\set{G} \label{lvolt-cc-opf:nu_i}
\end{align}
\label{mod:LVOLT-CC}
\end{subequations}
\allowdisplaybreaks[0]%
\noindent
where $\check{R}^L \coloneqq (R^L)^{-1}$ and claim:
\begin{prop}
\label{prop:including_losses}
Consider the LVOLT-CC in \cref{mod:LVOLT-CC}. Let $\lambda_i^P$, $\lambda_i^Q$ and $\gamma$ be the active power, reactive power and balancing regulation prices at node $i$. Then:
\begin{enumerate}[a)]
    \item Prices $\lambda_i^P$, $\lambda_i^Q$ are given by \eqref{eq:lambda_i_decomp}, \eqref{eq:pi_i_decomp} and $\gamma$ is given by:
    \begin{equation}
    \begin{split}
        \gamma = \frac{1}{\sum_{i=0}^n b_i} \left(
         s^2 +z_\epsilon s \sum_{i=1}^n (\delta_i^+ + \delta_i^-)b_i 
          + \sum_{i=1}^n b_i \nu_i^v \right),
        \label{eq:gamma_with_lvolt_cc}
    \end{split}
    \end{equation}
    where:
    \begin{align}
        \nu_i^v =  2  z_{\epsilon_v}   \sum_{j=1}^n R^L_{ji}(\mu_j^+ + \mu_j^-)
         \frac{ R^L_{j*} (\Sigma e + s^2 \alpha) }{\Stdv[\bm{u}_j(\om,\alpha)]}.
         \label{eq:nu_decomp_losses}
    \end{align}
    
    \item The optimal active production level $g_i^{P,*}$ is:
    \begin{align}
       & g_i^{P,*} = b_i(\lambda_i^P - (\delta_i^+ - \delta_i^-) + \xi_i^P(\lambda^P, \lambda^Q)) - a_i,
    %   \\
    %   & g_i^{Q,*} = \lambda_i^Q - (\theta_i^+ - \theta_i^-) + \xi_i^Q(\lambda^P, \lambda^Q)
    \end{align}
    where
    \begin{align}
        & \xi_i^P(\lambda^P, \lambda^Q) \coloneqq \sum_{j=1}^N \LP^P_{ji} \lambda_j^P + \sum_{j=1}^N \LQ^P_{ji} \lambda_j^Q  \label{eq:xi_p}
        % \\
        % & \xi_i^Q(\lambda^P, \lambda^Q) \coloneqq \sum_{j=1}^N \LP^P_{ji} \lambda_j^P + \sum_{j=1}^N \LQ^P_{ji} \lambda_j^Q. \label{eq:xi_q} 
    \end{align}
\end{enumerate}
\end{prop}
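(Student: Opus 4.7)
The plan is to parallel the KKT-based arguments of Propositions~\ref{prop:price_decomp}--\ref{prop:gamma_with_voltCC}, tracking only the two genuinely new features of the LVOLT-CC model: the loss-aware nodal balances \eqref{eq:enerbal_with_approx_loss_active}--\eqref{eq:enerbal_with_approx_loss_reactive}, which couple $g^P$ and $g^Q$ to $\lambda^P$ and $\lambda^Q$ through the loss factors $\LP^P, \LP^Q, \LQ^P, \LQ^Q$; and the replacement of $R$ by $R^L$ in \eqref{lvolt-cc-opf:zeta_i}--\eqref{lvolt-cc-opf:nu_i}. I would write the Lagrangian of LVOLT-CC and take the corresponding partial derivatives, then assemble the three claimed expressions from the resulting stationarity conditions.

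For part (a), I would first observe that the stationarity conditions w.r.t. $f_i^P$, $f_i^Q$, and $u_i$ are identical to \eqref{equiv-kkt:fP_i}, \eqref{equiv-kkt:fQ_i}, \eqref{equiv-kkt:u_i}, since the loss terms in \eqref{eq:enerbal_with_approx_loss_active}--\eqref{eq:enerbal_with_approx_loss_reactive} depend only on $g^P, g^Q$ (not on $f^P, f^Q, u$). Hence the derivation of \eqref{eq:lambda_i_decomp}, \eqref{eq:pi_i_decomp} carries over verbatim from Proposition~\ref{prop:price_decomp}. To obtain \eqref{eq:gamma_with_lvolt_cc}, the stationarity w.r.t. $\alpha_i$ inherits the same extra $\nu_i^v$ term already present in VOLT-CC (since \eqref{lvolt-cc-opf:nu_i} has the same functional form as \eqref{det-cc-opf:nu_i} up to using $\check{R}^L$ in place of $\check{R}$), while the condition for $\alpha_0$ is unchanged; summing these weighted by $b_i$ and invoking \eqref{cc-opf:gamma} reproduces the algebra of Proposition~\ref{prop:gamma} and yields \eqref{eq:gamma_with_lvolt_cc}. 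To get \eqref{eq:nu_decomp_losses}, I would take the stationarity w.r.t. $\rho_i^v$, substitute $t_i^v = \Stdv[\bm{u}_i(\om,\alpha)]$ (which holds when $\zeta_i\neq 0$ since \eqref{lvolt-cc-opf:zeta_i} is then tight), and invert $\check{R}^L = (R^L)^{-1}$. Since none of these manipulations use structural properties of $R$ beyond invertibility, the argument of Proposition~\ref{prop:gamma_with_voltCC} transfers immediately with $R\mapsto R^L$.

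For part (b), the main new content lies in the stationarity w.r.t. $g_i^P$, which now receives contributions from every loss-aware nodal balance through $\Ploss_j$ and $\Qloss_j$. Differentiating the Lagrangian terms $\sum_{j} \lambda_j^P \,\Ploss_j(g^P,g^Q) + \sum_{j} \lambda_j^Q \,\Qloss_j(g^P,g^Q)$ with respect to $g_i^P$ appends the term $-\sum_{j=1}^n \LP^P_{ji}\lambda_j^P - \sum_{j=1}^n \LQ^P_{ji}\lambda_j^Q = -\xi_i^P(\lambda^P,\lambda^Q)$ to \eqref{equiv-kkt:gP_i}. Solving the resulting equation for $g_i^P$ gives the stated expression.

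The main obstacle will be bookkeeping: accurately tracking which loss-factor matrix and which index order enter each partial derivative, in particular making sure that $\partial \Ploss_j/\partial g_i^P = \LP^P_{ji}$ (not $\LP^P_{ij}$) and that the reactive-balance Lagrangian contributes through $\LQ^P_{ji}$ rather than $\LP^Q_{ji}$, and verifying that no loss-factor term inadvertently leaks into the $f$- or $u$-stationarity conditions. Once the indices and signs are fixed, the remaining steps reduce to a mechanical replay of Propositions~\ref{prop:price_decomp}--\ref{prop:gamma_with_voltCC} with the substitution $R \mapsto R^L$ and a single added generation-side term $\xi_i^P$.
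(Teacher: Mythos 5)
Your proposal is correct and follows essentially the same route as the paper's proof: write the KKT conditions of LVOLT-CC, note that the $f$-, $u$-, and $\alpha$-stationarity conditions are unchanged except for the substitution $R\mapsto R^L$ (so part (a) is inherited from Propositions~\ref{prop:price_decomp} and~\ref{prop:gamma_with_voltCC}), and read part (b) off the $g_i^P$-stationarity condition, which acquires exactly the $-\xi_i^P(\lambda^P,\lambda^Q)$ term you identify, with the correct index order $\LP^P_{ji}$, $\LQ^P_{ji}$. No gaps.
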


\begin{proof}
Consider the KKT optimality conditions for \cref{mod:LVOLT-CC}:
\allowdisplaybreaks
\begin{subequations}
\begin{align}
    & && \text{\cref{equiv-kkt:u_i,equiv-kkt:fP_i,equiv-kkt:fQ_i,equiv-kkt:alpha_0,equiv-kkt:delta_i+,equiv-kkt:delta_i-,equiv-kkt:theta_i+,equiv-kkt:theta_i-,equiv-kkt:eta_i,volt-kkt:mu_i+,volt-kkt:mu_i-}} \hspace{-4cm} &&
\nonumber \\
    &(g_i^P): && 
        \frac{(g_i + a_i)}{b_i} + \delta_i^+ - \delta_i^- - \lambda^P_i  && \label{l-volt-kkt:gi_P} \\ 
        & && \qquad - \underbrace{\sum_{j=1}^N \LP^P_{ji} \lambda_j^P + \sum_{j=1}^N \LQ^P_{ji} \lambda_j^Q}_{\xi_i^P(\lambda^P, \lambda^Q)} = 0 && i\in\set{G} \nonumber \\
    &(g_i^Q): && 
        \theta_i^+ - \theta_i^- - \lambda^Q_i \label{l-volt-kkt:gi_Q}\\
        & && - \sum_{j=1}^N \LP^P_{ji} \lambda_j^P + \sum_{j=1}^N \LQ^P_{ji} \lambda_j^Q = 0 && i\in\set{G}
        \nonumber \\
        % & && \qquad - \underbrace{\sum_{j=1}^N \LP^P_{ji} \lambda_j^P + \sum_{j=1}^N \LQ^P_{ji} \lambda_j^Q}_{\xi_i^Q(\lambda^P, \lambda^Q)} = 0 && i\in\set{G}  \nonumber\\
    &(\rho_i^v): && 
        \sum_{j=1}^n \eta^v_j \check{R}_{ji}^L  + \zeta_i \frac{ (R_{i*}^L + \rho_i^v e^{\!\top})\Sigma e }{t_i} = 0. \hspace{-4cm} &&  \label{l-volt-kkt:rho_i_v} \\
    & && && i\in\set{N}^+  \nonumber
\end{align}%
\end{subequations}%
\allowdisplaybreaks[0]%
Our result in Proposition 4a) follows directly from the proofs of Propositions~\ref{prop:price_decomp} and~\ref{prop:gamma_with_voltCC} with $R$ replaced by $R^L$ (see \cref{l-volt-kkt:rho_i_v}). Then, our result in Proposition 4b) follows directly from \cref{l-volt-kkt:gi_P,l-volt-kkt:gi_Q}.
\end{proof}

The term in \cref{eq:xi_p} relates the DLMP and optimal production level at node $i$ to the DLMPs at all other nodes via the loss factors.
For example, if production at $i$ has a high impact on active power losses at node $j$ (given by $\LP^P_{ji}$) and DLMP $\lambda_j^P$ is high, then active power production at node $i$ is discouraged by a lower DLMP $\lambda_i^P$.
Similarly to Proposition~\ref{prop:gamma_with_voltCC}, Proposition~\ref{prop:including_losses} reveals that power losses distort a competitive equilibrium because they are not part of the individual producers decisions.
}%blue

\section{Case Study}

\begin{figure}[b]
\centering
\includegraphics[width=1\linewidth]{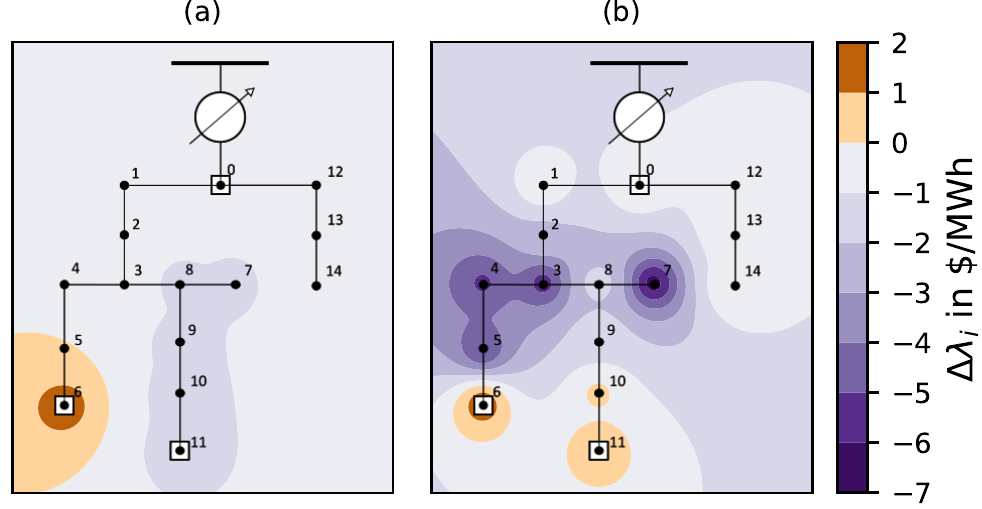}
\caption{DLMP difference $\Delta\lambda^P_i$ of (a) GEN-CC  and (b) VOLT-CC relative to the deterministic case.}
\label{fig:lambda_diffs_hm}
\end{figure}

The case study is performed on the 15-node radial feeder from \cite{papavasiliou2018analysis} with two minor modifications: one controllable DER is added at node 11 (see Fig.~\ref{fig:lambda_diffs_hm}) and the power flow limit of edges 2 and 3 is doubled to avoid congestion in the deterministic case.
Cost parameters of DERs at nodes 6 and 11 are set to $c_{1,i} = \unitfrac[10]{\$}{MWh}$, $c_{2,i} = \unitfrac[5]{\$}{MWh^2}$, $c_{0,i} = 0$.
The substation cost is set to  $c_{1,0} = \unitfrac[50]{\$}{MWh}$, $c_{2,0} = \unitfrac[400]{\$}{MWh^2}$, $c_{0,0} = 0$.
Note that this selection incentivizes the use of DERs.
The data of \cite{papavasiliou2018analysis} is used as scheduled net demand with a normally distributed zero-mean error, standard deviation of $\sigma_i^P = 0.2 d_i^P$ and no covariance among the nodes. 
The security parameter of the chance constraints is set to $\epsilon_g = \unit[5]{\%}$ and $\epsilon_v = \unit[1]{\%}$.
All models in the case study are implemented using the Julia JuMP package and our code can be downloaded from~\cite{drlearning_code}.

\subsection{Effect of uncertainty on DLMPs}

\Cref{tab:det_optimal_solution,tab:genCC_optimal_solution,tab:voltCC_optimal_solution} summarize the optimal solution and prices in the deterministic, GEN-CC and VOLT-CC cases.
Note that the deterministic case is solved for the expected net demand and $\alpha_i = 0, \forall i$.
In the deterministic and GEN-CC cases, none of the generator limits are active and, therefore, their power production does not differ. 
Similarly the resulting voltage magnitudes do not change as the GEN-CC  considers deterministic voltage constraints and only the flow limit of edges $8$ and $6$ are binding. 
In the VOLT-CC, however, the resulting voltage magnitudes are closer to unity in order to accommodate real-time power imbalances. 
As a result, the voltage constraints \cref{det-cc-opf:zeta_i,det-cc-opf:nu_i,det-cc-opf:mu_i+,det-cc-opf:mu_i-} in the VOLT-CC yield non-zero dual multipliers.
Figure~\ref{fig:lambda_diffs_hm} itemizes the effect of uncertainty on $\lambda^P_i$ relative to the deterministic case, where $\Delta\lambda^P_i = \lambda_i^{P,(\text{GEN-CC/VOLT-CC})} - \lambda_i^{P,(\text{DET})}$. 
While the passive branch of the system (nodes $12$ to $14$ without any controllable DERs) shows no changes in DLMPs as it is fully supplied by the substation, DLMPs vary  in the branches with DERs.

\begin{table}[t]
\renewcommand{\arraystretch}{0.8}
\setlength\tabcolsep{4 pt}
\caption{Optimal Deterministic Solution}
\label{tab:det_optimal_solution}
\centering
\begin{tabular}{c|cccccc}
\toprule
 $i$ &  $g_i^P$ & $g_i^Q$ & $\alpha_i$ &  $\sqrt{(f_i^{P})^2 + (f_i^{Q})^2}$ & $v_i$ & $\lambda^P_i$ \\
\midrule
   0 & 0.994 & 0.344 &    -- & 0.000 &    1.000 &  50.000 \\
   1 &    -- &    -- &    -- & 0.404 &    0.975 &  50.000 \\
   2 &    -- &    -- &    -- & 0.446 &    1.012 &  50.000 \\
   3 &    -- &    -- &    -- & 0.446 &    1.067 &  50.000 \\
   4 &    -- &    -- &    -- & 0.210 &    1.071 &  50.000 \\
   5 &    -- &    -- &    -- & 0.227 &    1.074 &  50.000 \\
   6 & 0.278 & 0.006 &    -- & 0.256\textsuperscript{*} &    1.086 &  10.411 \\
   7 &    -- &    -- &    -- & 0.197 &    1.086 &  10.208 \\
   8 &    -- &    -- &    -- & 0.256\textsuperscript{*} &    1.077 &  10.208 \\
   9 &    -- &    -- &    -- & 0.083 &    1.078 &  10.208 \\
  10 &    -- &    -- &    -- & 0.108 &    1.081 &  10.208 \\
  11 & 0.140 & 0.032 &    -- & 0.130 &    1.082 &  10.208 \\
  12 &    -- &    -- &    -- & 0.660 &    0.983 &  50.000 \\
  13 &    -- &    -- &    -- & 0.025 &    0.978 &  50.000 \\
  14 &    -- &    -- &    -- & 0.024 &    0.975 &  50.000 \\
\midrule
    & \multicolumn{6}{l}{\textsuperscript{*} Constraint is binding} \\
\bottomrule
\end{tabular}
\end{table}

\begin{table}
\renewcommand{\arraystretch}{0.8}
\setlength\tabcolsep{3.5 pt}
\caption{Optimal GEN-CC Solution}
\label{tab:genCC_optimal_solution}
\centering
\begin{tabular}{c|ccccccc}
\toprule
 $i$ &  $g_i^P$ & $g_i^Q$ &  $\alpha_i$ & $\sqrt{(f_i^{P})^2 + (f_i^{Q})^2}$ & $v_i$ & $\lambda^P_i$ & $\gamma$\\
\midrule
   0 & 0.994 & 0.344 & 0.003 & 0.000 &    1.000 &  50.00 & 0.273 \\
   1 &    -- &    -- &    -- & 0.404 &    0.975 &  50.00 & -- \\
   2 &    -- &    -- &    -- & 0.446 &    1.012 &  50.00 & -- \\
   3 &    -- &    -- &    -- & 0.446 &    1.067 &  50.00 & -- \\
   4 &    -- &    -- &    -- & 0.210 &    1.071 &  50.00 & -- \\
   5 &    -- &    -- &    -- & 0.227 &    1.074 &  50.00 & -- \\
   6 & 0.278 & 0.006 & 0.646 & 0.256\textsuperscript{*} &    1.086 &  11.99 & -- \\
   7 &    -- &    -- &    -- & 0.197 &    1.086 &   8.769 & -- \\
   8 &    -- &    -- &    -- & 0.256\textsuperscript{*} &    1.077 &   8.769 & -- \\
   9 &    -- &    -- &    -- & 0.083 &    1.078 &   8.769 & -- \\
  10 &    -- &    -- &    -- & 0.108 &    1.081 &   8.769 & -- \\
  11 & 0.140\textsuperscript{*} & 0.032 & 0.351 & 0.130 &    1.082 &   8.769 & -- \\
  12 &    -- &    -- &    -- & 0.660 &    0.983 &  50.00 & -- \\
  13 &    -- &    -- &    -- & 0.025 &    0.978 &  50.00 & -- \\
  14 &    -- &    -- &    -- & 0.024 &    0.975 &  50.00 & -- \\
\midrule
 & \multicolumn{7}{l}{\textsuperscript{*} Constraint is binding} \\
\bottomrule
\end{tabular}
\end{table}

\begin{table}[t]
\renewcommand{\arraystretch}{0.8}
\setlength\tabcolsep{3.5 pt}
\caption{Optimal VOLT-CC Solution}
\label{tab:voltCC_optimal_solution}
\centering
\begin{tabular}{c|ccccccc}
\toprule
 $i$ &  $g_i^P$ & $g_i^Q$ &  $\alpha_i$ & $\sqrt{(f_i^{P})^2 + (f_i^{Q})^2}$ & $v_i$ & $\lambda^P_i$ & $\gamma$ \\
\midrule
   0 & 1.033 &  0.490 & 0.377 & 0.000 &    1.000 &  50.00 & 31.51 \\
   1 &    -- &     -- &    -- & 0.523 &    0.956 &  49.97 & -- \\
   2 &    -- &     -- &    -- & 0.439 &    0.972 &  47.88 & -- \\
   3 &    -- &     -- &    -- & 0.439 &    0.996 &  44.59 & -- \\
   4 &    -- &     -- &    -- & 0.208 &    0.997 &  44.83 & -- \\
   5 &    -- &     -- &    -- & 0.222 &    0.999\textsuperscript{*} &  45.056 & -- \\
   6 & 0.258 & -0.068 & 0.370 & 0.256\textsuperscript{*} &    1.005\textsuperscript{*} &  12.03 & -- \\
   7 &    -- &     -- &    -- & 0.197 &    1.011\textsuperscript{*} &   3.884 & -- \\
   8 &    -- &     -- &    -- & 0.256\textsuperscript{*} &    1.001 &   8.577 & -- \\
   9 &    -- &     -- &    -- & 0.090 &    1.001 &   9.111 & -- \\
  10 &    -- &     -- &    -- & 0.100 &    1.001\textsuperscript{*} &  10.39 & -- \\
  11 & 0.121 & -0.040 & 0.253 & 0.116 &    1.002\textsuperscript{*} &  10.95 & -- \\
  12 &    -- &     -- &    -- & 0.660 &    0.983 &  50.00 & -- \\
  13 &    -- &     -- &    -- & 0.025 &    0.978 &  50.00 & -- \\
  14 &    -- &     -- &    -- & 0.024 &    0.975 &  50.00 & -- \\
\midrule
 & \multicolumn{7}{l}{\textsuperscript{*} Constraint is binding} \\
\bottomrule
\end{tabular}
\end{table}

\subsection{Price Decomposition}
Tables~\ref{tab:GEN-CC_decomp} and \ref{tab:VOLT-CC_decomp} itemize the components of the energy price following Proposition~\ref{prop:price_decomp}.
{\color{black} % Block Fig4Intro
Additionally, Fig.~\ref{fig:constraint_limit_illustration} illustrates the nodes and edges with binding limits and, thus, non-zero Lagrangian multipliers.
}
Since the GEN-CC has no active voltage constraints the energy price at each node is determined by $\lambda^P_{\set{A}_i}$, i.e. the energy price at the ancestor node, and the congestion as per \cref{eq:active_power_price}.
There is no reactive power price component due to inactive voltage limits.
In the VOLT-CC case, on the other hand, the voltage limits become active and therefore reactive power price is non-zero.

In Fig.~\ref{fig:lambda_diffs_hm} we observe higher prices at and close to the nodes with DERs.
As follows from Eq.~\cref{eq:lambda_i_decomp} and summarized in Table~\ref{tab:VOLT-CC_decomp_mus}, prices at those nodes are dominated by the lower voltage limits, thus quantifying the value of downward regulation.
A negative net demand value at node $7$ indicates a high uncontrolled behind-the-meter generation, which leads to low prices dominated by the upper voltage limit. 
This incentivizes  a higher demand and lower generation.
{\color{black} % Block VoltLimit final
At node $6$ both the upper and lower voltage limits are binding (see Fig.~\ref{fig:constraint_limit_illustration} and non-zero $\mu_6^+, \mu_6^-$ in the bottom row of Table~\ref{tab:VOLT-CC_decomp_mus}), thus indicating that no more balancing regulation at this node is possible without increasing the likelihood of voltage limit violations. 
Hence, the trade-off between the power output and balancing regulation of the DER at node $6$ is no longer driven by its profit-maximizing objective, but rather by the the physical limits of the system.
}

\begin{figure}
    \centering
    \includegraphics[width=1\linewidth]{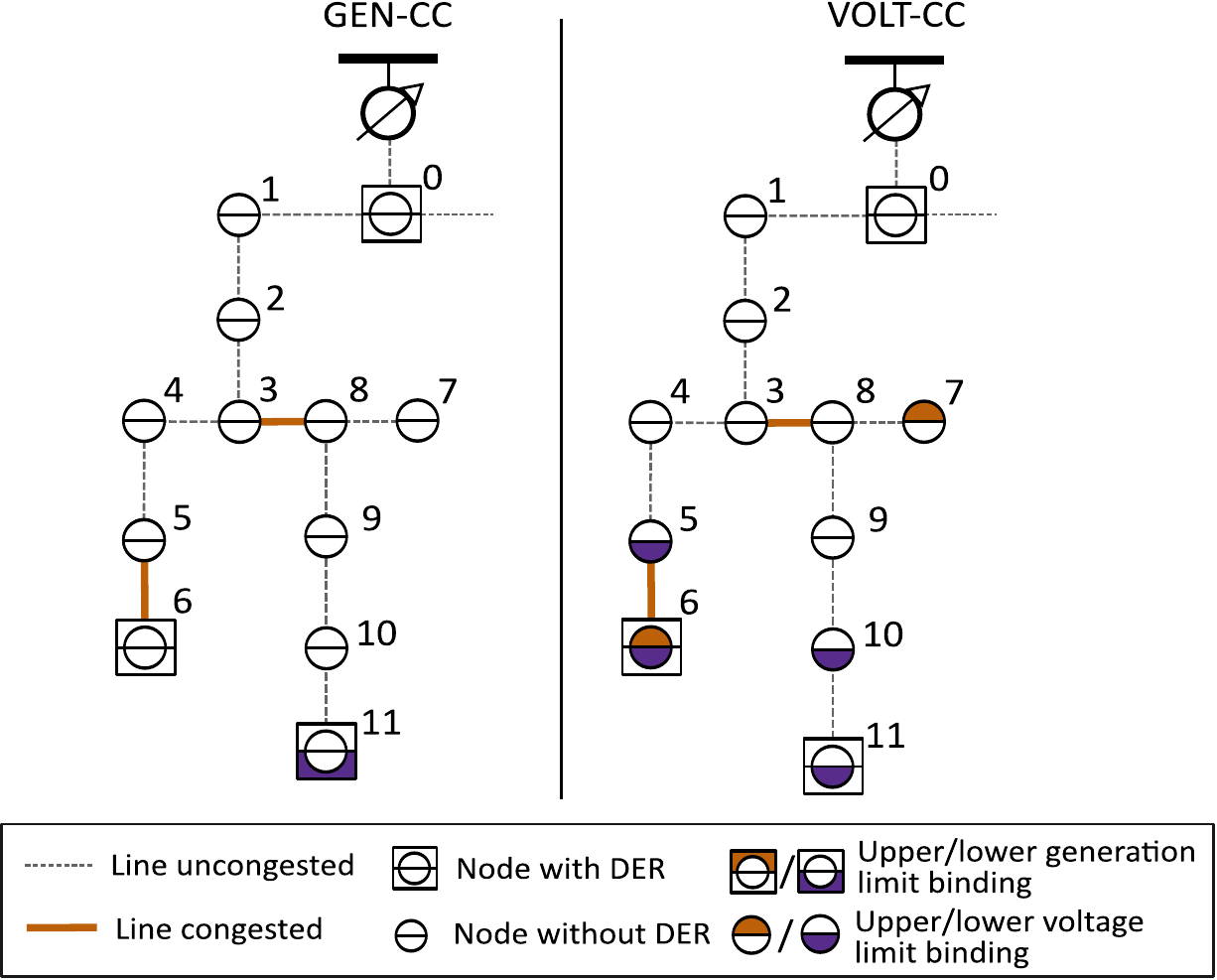}
    \caption{Illustration of binding edge, voltage and generation limits in the GEN-CC and the VOLT-CC.}
    \label{fig:constraint_limit_illustration}
\end{figure}

\begin{table}[t]
\renewcommand{\arraystretch}{0.8}
\setlength\tabcolsep{4pt}
\caption{DLMP Decomposition of the GEN-CC, cf. Eq.~\eqref{eq:active_power_price}}
\label{tab:GEN-CC_decomp}
\centering
\begin{tabular}{c|c|cccc}
\toprule
 $i$ &  $\lambda^P_i$ & $\lambda^P_{\set{A}_i}$ & $\lambda^Q_i\frac{r_i}{x_i}$ & $\lambda^Q_{\set{A}_i}\frac{r_i}{x_i}$ & $2\eta_i(f_i^P\!+\!\frac{r_i}{x_i} f_i^Q)$ \\
\midrule
   0 &  50.000 &      -0.000 &   -0.000 &   -0.000 &  0.000 \\
   1 &  50.000 &      50.000 &    0.000 &    0.000 & -0.000 \\
   2 &  50.000 &      50.000 &    0.000 &    0.000 &  0.000 \\
   3 &  50.000 &      50.000 &    0.000 &    0.000 &  0.000 \\
   4 &  50.000 &      50.000 &    0.000 &    0.000 & -0.000 \\
   5 &  50.000 &      50.000 &    0.000 &    0.000 &  0.000 \\
   6 &  11.996 &      50.000 &    0.000 &    0.000 & 38.004 \\
   7 &   8.769 &       8.769 &    0.000 &    0.000 &  0.000 \\
   8 &   8.769 &      50.000 &    0.000 &    0.000 & 41.231 \\
   9 &   8.769 &       8.769 &    0.000 &    0.000 & -0.000 \\
  10 &   8.769 &       8.769 &    0.000 &    0.000 &  0.000 \\
  11 &   8.769 &       8.769 &    0.000 &    0.000 & -0.000 \\
  12 &  50.000 &      50.000 &    0.000 &    0.000 &  0.000 \\
  13 &  50.000 &      50.000 &    0.000 &    0.000 &  0.000 \\
  14 &  50.000 &      50.000 &    0.000 &    0.000 &  0.000 \\
\bottomrule
\end{tabular}
\end{table}

\begin{table}[t]
\renewcommand{\arraystretch}{0.8}
\setlength\tabcolsep{4pt}
\caption{DLMP Decomposition of the VOLT-CC, cf. Eq.~\eqref{eq:active_power_price}}
\label{tab:VOLT-CC_decomp}
\centering
\begin{tabular}{c|c|cccc}
\toprule
 $i$ &  $\lambda^P_i$ & $\lambda^P_{\set{A}_i}$ & $\lambda^Q_i\frac{r_i}{x_i}$ & $\lambda^Q_{\set{A}_i}\frac{r_i}{x_i}$ & $2\eta_i(f_i^P\!+\!\frac{r_i}{x_i} f_i^Q)$ \\
\midrule
   0 &  50.000 &      -0.000 &   -0.000 &   -0.000 &  0.000 \\
   1 &  49.976 &      50.000 &    0.024 &    0.000 & -0.000 \\
   2 &  47.881 &      49.976 &    4.088 &    1.993 &  0.000 \\
   3 &  44.596 &      47.881 &    7.373 &    4.088 &  0.000 \\
   4 &  44.836 &      44.596 &    7.132 &    7.373 & -0.000 \\
   5 &  45.056 &      44.836 &    6.887 &    7.108 &  0.000 \\
   6 &  12.025 &      45.056 &    0.000 &    6.911 & 39.942 \\
   7 &   3.884 &       8.577 &    7.072 &    2.379 &  0.000 \\
   8 &   8.577 &      44.596 &    2.376 &    7.369 & 41.012 \\
   9 &   9.111 &       8.577 &    1.842 &    2.376 & -0.000 \\
  10 &  10.399 &       9.111 &    0.552 &    1.840 &  0.000 \\
  11 &  10.949 &      10.399 &    0.000 &    0.550 & -0.000 \\
  12 &  50.000 &      50.000 &    0.000 &    0.000 &  0.000 \\
  13 &  50.000 &      50.000 &    0.000 &    0.000 &  0.000 \\
  14 &  50.000 &      50.000 &    0.000 &    0.000 &  0.000 \\
\bottomrule
\end{tabular}
\end{table}

\begin{table}[t]
\renewcommand{\arraystretch}{0.8}
\setlength\tabcolsep{4pt}
\caption{DLMP Decomposition of the VOLT-CC based on voltage constraints, cf. Eq.~\eqref{eq:lambda_i_decomp}}
\label{tab:VOLT-CC_decomp_mus}
\centering
\begin{tabular}{c|c|c|ccc}
\toprule
 $i$ &  $\lambda^P_i$ & $(\mu_i^+\!-\!\mu_i^-)$ & $\lambda^P_{\set{A}_i}$ & $2r_i\sum_{j\in\set{D}_i}(\mu_j^+\!-\!\mu_j^-)$ & $2f_i^P \eta_i$ \\
\midrule
   0 &  50.000 &   0.000 &      -0.000 &      -0.000 &   0.000 \\
   1 &  49.976 &  -0.000 &      50.000 &       0.024 &  -0.000 \\
   2 &  47.881 &  -0.000 &      49.976 &       2.096 &   0.000 \\
   3 &  44.596 &  -0.000 &      47.881 &       3.285 &   0.000 \\
   4 &  44.836 &  -0.000 &      44.596 &      -0.240 &  -0.000 \\
   5 &  45.056 &  -0.001 &      44.836 &      -0.220 &   0.000 \\
   6 &  12.025 &  -6.290\textsuperscript{\textdagger} &      45.056 &      -0.606 &  33.638 \\
   7 &   3.884 &  44.871 &       8.577 &       4.694 &   0.000 \\
   8 &   8.577 &  -0.000 &      44.596 &       1.478 &  34.541 \\
   9 &   9.111 &  -0.000 &       8.577 &      -0.534 &  -0.000 \\
  10 &  10.399 &  -0.004 &       9.111 &      -1.288 &   0.000 \\
  11 &  10.949 & -26.709 &      10.399 &      -0.550 &  -0.000 \\
  12 &  50.000 &  -0.000 &      50.000 &      -0.000 &   0.000 \\
  13 &  50.000 &  -0.000 &      50.000 &      -0.000 &   0.000 \\
  14 &  50.000 &  -0.000 &      50.000 &      -0.000 &   0.000 \\
\midrule
 & \multicolumn{5}{l}{\textsuperscript{\textdagger} $\mu_6^+ = 20.186$, $\mu_6^- = 26.476$}    \\
\bottomrule
\end{tabular}
\end{table}

The regulation price in the GEN-CC case ($\gamma^{\text{GEN-CC}} = 0.273$) is notably lower relative to the VOLT-CC case ($\gamma^{\text{VOLT-CC}} = 31.511$).
As per Proposition~\ref{prop:gamma}, $\gamma^{\text{GEN-CC}}$ is only driven by the power output limits (Table~\ref{tab:gamma_decomp}), where only the lower output limit at node $11$ is binding.
Due to a  low power price at node $11$ as compared to node $6$, the scheduled power production is also low, which limits the downward regulation capacity provided.
By introducing voltage chance constraints in the VOLT-CC case, the DLMP composition changes as per Proposition~\ref{prop:gamma_with_voltCC} (Table~\ref{tab:gamma_decomp}).
Each node with binding voltage constraints ($5,6,7,10,11$) contributes to the formation of $\gamma$ by weighting the impact of the system-wide regulation participation on the voltage standard deviation against the marginal value of relaxed voltage limits for each node (Eq.~\ref{eq:nu_decomp}).

\begin{table}[t]
\renewcommand{\arraystretch}{0.8}
\caption{Regulation Price Decomposition of the VOLT-CC}
\label{tab:gamma_decomp}
\centering
\begin{tabular}{c c|cc|ccc}
\toprule
& & \multicolumn{2}{c|}{GEN-CC} & \multicolumn{3}{c}{VOLT-CC} \\
 $i$ &  $b_i$ & $\delta_i^+$ & $\delta_i^-$ &  $\delta_i^+$ & $\delta_i^-$ & $\nu_i$  \\
\midrule
  0 & 0.0005 &         -- &          -- &     -- &          -- & -- \\
   1 &    -- &         -- &          -- &     -- &          -- &  0.120 \\
   2 &    -- &         -- &          -- &     -- &          -- & 10.699 \\
   3 &    -- &         -- &          -- &     -- &          -- & 27.281 \\
   4 &    -- &         -- &          -- &     -- &          -- & 28.199 \\
   5 &    -- &         -- &          -- &     -- &          -- & 29.040 \\
   6 &   0.1 &     -0.000 &      -0.000 & -0.000 &      -0.000 & 31.357 \\
   7 &    -- &         -- &          -- &     -- &          -- & 32.533 \\
   8 &    -- &         -- &          -- &     -- &          -- & 30.201 \\
   9 &    -- &         -- &          -- &     -- &          -- & 30.472 \\
  10 &    -- &         -- &          -- &     -- &          -- & 31.126 \\
  11 &   0.1 &     -0.000 &       0.310 & -0.000 &      -0.000 & 31.406 \\
  12 &    -- &         -- &          -- &     -- &          -- &  0.000 \\
  13 &    -- &         -- &          -- &     -- &          -- &  0.000 \\
  14 &    -- &         -- &          -- &     -- &          -- &  0.000 \\
\bottomrule
\end{tabular}
\end{table}

{\color{black} % Block CaseStudyLosses
\subsection{Impact of Losses}

\begin{table}[t]
\renewcommand{\arraystretch}{0.8}
\setlength\tabcolsep{3.5 pt}
\caption{Optimal LVOLT-CC Solution}
\label{tab:losses_voltCC_optimal_solution}
\centering
{\color{black}
\begin{tabular}{c|ccccccc}
\toprule
 $i$ &  $g_i^P$ & $g_i^Q$ &  $\alpha_i$ & $\sqrt{(f_i^{P})^2 + (f_i^{Q})^2}$ & $v_i$ & $\lambda^P_i$ & $\gamma$ \\
\midrule
   0 & 1.075 &  0.607 & 0.144 & 0.000 &    1.000 &  50.000 & 30.254 \\
   1 &    -- &     -- &    -- & 0.576 &    0.952 &  49.971 &     -- \\
   2 &    -- &     -- &    -- & 0.422 &    0.966 &  47.417 &     -- \\
   3 &    -- &     -- &    -- & 0.433 &    0.997 &  43.414 &     -- \\
   4 &    -- &     -- &    -- & 0.209 &    0.998 &  43.666 &     -- \\
   5 &    -- &     -- &    -- & 0.223 &    0.999 &  43.897 &     -- \\
   6 & 0.256 & -0.077 & 0.515 & 0.256\textsuperscript{*} &    1.005\textsuperscript{*} &  14.005 &     -- \\
   7 &    -- &     -- &    -- & 0.197 &    1.016\textsuperscript{*} &   6.372 &     -- \\
   8 &    -- &     -- &    -- & 0.256\textsuperscript{*} &    1.006 &   4.169 &     -- \\
   9 &    -- &     -- &    -- & 0.079 &    1.007 &   4.169 &     -- \\
  10 &    -- &     -- &    -- & 0.102 &    1.009 &   4.169 &     -- \\
  11 & 0.137 &  0.012 & 0.341 & 0.124 &    1.011 &   4.169 &     -- \\
  12 &    -- &     -- &    -- & 0.661 &    0.983 &  50.000 &     -- \\
  13 &    -- &     -- &    -- & 0.026 &    0.978 &  50.000 &     -- \\
  14 &    -- &     -- &    -- & 0.024 &    0.975 &  50.000 &     -- \\
\midrule
 & \multicolumn{7}{l}{\textsuperscript{*} Constraint is binding} \\
\bottomrule
\end{tabular}
}
\end{table}

\begin{figure}
    \centering
    \includegraphics[width=\linewidth]{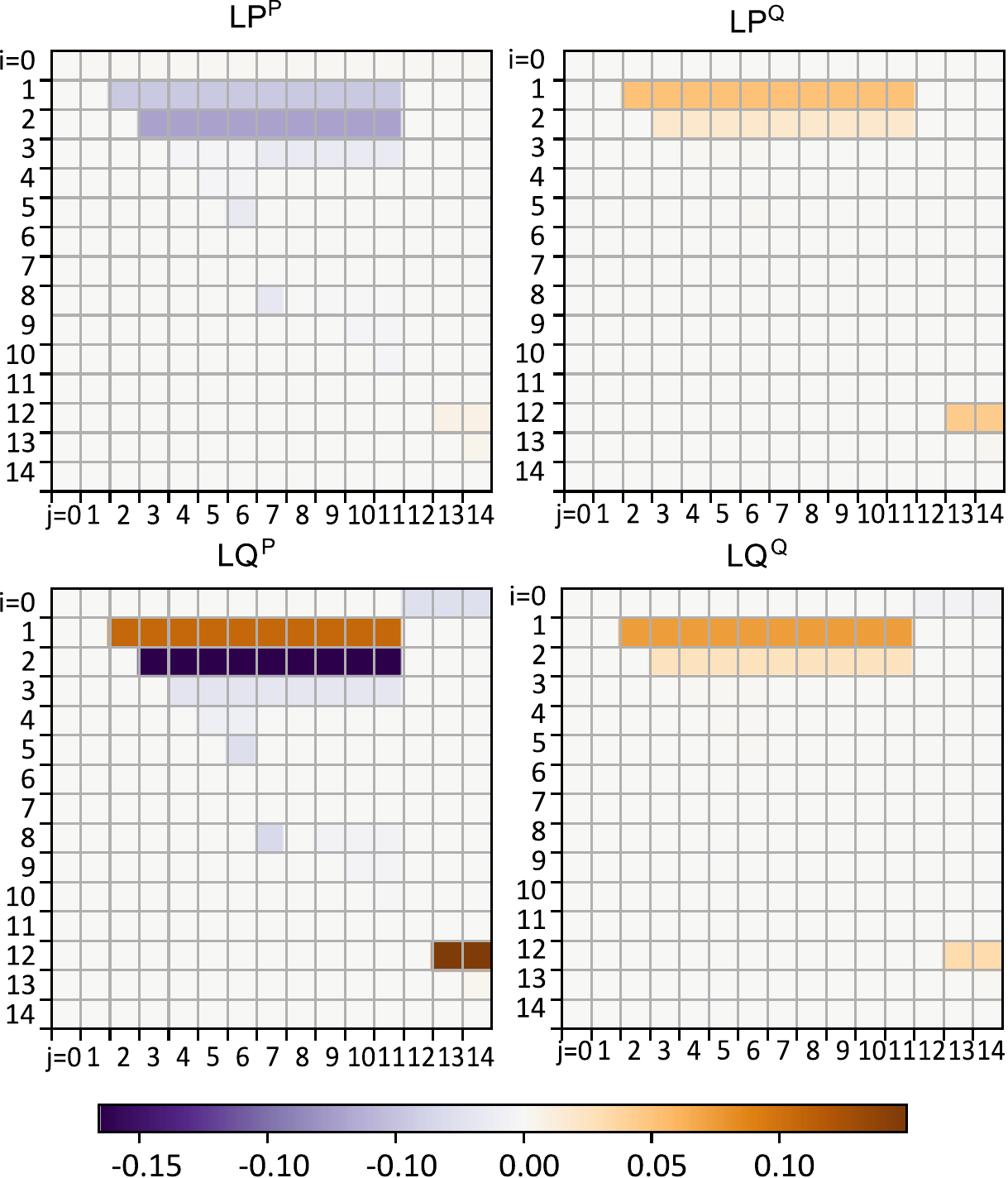}
    \caption{Illustration of loss factor matrices $\LP^P$, $\LP^Q$, $\LQ^P$, $\LQ^Q$ itemizing the sensitivity of active and reactive nodal net injections at node $j$ (`x-axis') on the active and reactive FND of node $i$ (`y-axis').}
    \label{fig:loss_factor_matrix}
\end{figure}

Table~\ref{tab:losses_voltCC_optimal_solution} summarizes the optimal LVOLT-CC solution obtained by using loss factor matrices $\LP^{P}$, $\LP^{Q}$, $\LQ^{P}$, $\LQ^{Q}$ as defined in \cref{eq:LP_entries,eq:LQ_entries} and shown in Fig.~\ref{fig:loss_factor_matrix}.
Negative elements of the matrices shown in Fig.~\ref{fig:loss_factor_matrix} indicate that additional DER production at node $j$ will increase power losses allocated to node $i$ based on FND.
For example, $\LP^P$ shows that additional active production at nodes $0$ to $11$ will increase active power losses. 
On the other hand, additional active production on the passive branch (nodes $12$ to $14$), where no DERs are installed, will reduce active power losses. 

Since increasing DER production at nodes $6$ and $11$ increases system losses, see  Fig.~\ref{fig:loss_factor_matrix}, we observe that the power output of controllable DERs changes slightly, as compared to the results of the VOLT-CC, and  the additional power  needed  to compensate for system  losses is  provided by the substation (node $0$).
By internalizing the loss factors into the voltage chance constraints via matrix $R^L$ as in \cref{eq:R_lossaware}, the impact of balancing participation on the voltage limits is no longer symmetric as in~\cref{eq:def_R_X}.
Thus, we observe higher balancing participation factors of the DERs at nodes $6$ and $11$ relative to the VOLT-CC.
Additionally, a greater power supply from the substation and the DER at node $11$ leads to non-binding voltage constraints and, thus, uniform DLMPs at nodes $8$~to~$11$.
In line with the theoretical results of Proposition~\ref{prop:including_losses}a), the additional power losses have almost no impact on the price for balancing regulation.
The small difference relative to the VOLT-CC is mainly caused by non-binding voltage constraints at nodes $10$~and~$11$.
}%blue

\section{Conclusion}

{\color{black}
This paper described an approach to derive stochasticity-aware DLMPs for electricity pricing in low-voltage electric power distribution systems that explicitly internalize uncertainty and risk parameters.} 
These DLMPs are also shown to constitute a robust competitive equilibrium, which can be leveraged towards emerging distribution electricity  market designs. In the future, our work will focus on the application of the proposed pricing theory to decentralized and communication-constrained control of DERs and for enabling electricity pricing in distribution systems with a high penetration rate of DERs and near-zero marginal production costs. 
{\color{black} % Block SDPExtrnsion Version 07-26
Methodological extensions can encompass uncertainty internalization via semidefinite programming to allow for non-linear power flow representations, \cite{wang2017accurate}, 
{\color{black} and the impact of asymmetric information and strategic behavior.}
}

\bibliographystyle{IEEEtran}
\bibliography{literature_local}

\appendix
\numberwithin{equation}{subsection}

\section{Deterministic Reformulations}

\subsection{Expected Generation Cost}
\label{ax:expected_generation_cost}
\begin{equation}
    \begin{split}
    & c_i(\bm{g}_i^P)  = \frac{(g_i^P + \alpha_i \Om + a_i)^2}{2 b_i} + c_i \\
    & = \frac{(g_i^P)^2 + \alpha_i^2 \Om^2 + a_i^2 +2(g_i^P \alpha_i \Om + g_i^P a_i + \alpha_i \Om a_i)}{2 b_i} + c_i.    
    \end{split}
\end{equation}
Recall that $\Eptn[\om] = 0$. Then $\Eptn[e^{\!\top}\om] = \Eptn[\Om] = 0$ and we obtain:
\allowdisplaybreaks
\begin{equation}
    \begin{split}
    \Eptn[c_i(\bm{g}_i^P)] 
    & = \frac{(g_i^P + a_i)^2}{2 b_i} + c_i + \Eptn\left[\frac{\alpha_i^2 \Om^2}{2b_i}\right]  \\
    & = c_i(g_i^P) + \frac{\alpha_i^2}{2b_i}\Eptn\left[\Om^2\right]        
    \end{split}
\end{equation}
\allowdisplaybreaks[0]

As $\Var[\bm{\Om}] = \Eptn[\Om^2] - \Eptn[\Om]^2$ and $\Var[\Om] = \Var[e^{\!\top}\om] = e^{\!\top}\Sigma e = s^2$ we find the respective expression $\Eptn\left[\Om^2\right]$.

\subsection{Voltage Variance}
\label{ax:voltage_variance}

Using the expression for the uncertain voltage \cref{eq:voltage_reformulated}, we can obtain 
\allowdisplaybreaks
\begin{align}
        \bm{u}_i(\om, \alpha) - \bar{u}_i 
        &= 2 [R (\om -\alpha e^{\!\top} \om]_i \nonumber\\
        &= 2 [R \om]_i - [R\alpha]_i (e^{\!\top} \om) \\
        &= 2 (R_{i*} - \rho_i^v e^{\!\top}) \om   . \nonumber
\end{align}
\allowdisplaybreaks[0]
Then Eq.~\cref{eq:voltage_stdev_conic} follows from:
\begin{align}
  \Var[\bm{u}_i(\om,\alpha)] = 4 \norm{(R_{i*} - \rho_i^v e^{\!\top})\Sigma^{1/2}}_2^2.
\end{align}

\end{document}